\definecolor{linkC}{HTML}{710580}
\definecolor{labels}{HTML}{cc0000}
\crefname{theorem}{Thm.}{Theorems}
\Crefname{theorem}{Thm.}{Theorems}
\crefname{example}{Ex.}{Examples}
\Crefname{example}{Ex.}{Examples}
\crefname{equation}{Eq.}{Eqs.}
\Crefname{equation}{Eq.}{Eqs.}
\crefname{figure}{Fig.}{Figs.}
\Crefname{figure}{Fig.}{Figs.}
\crefname{algorithm}{Algo.}{Algos.}
\Crefname{algorithm}{Algo.}{Algos.}
\crefname{definition}{Def.}{Defs.}
\Crefname{definition}{Def.}{Defs.}
\newcommand\donotshow[1]{}
\newcommand{\superimpose}[2]{{%
		\ooalign{%
			\hfil$\m@th#1\@firstoftwo#2$\hfil\cr
			\hfil$\m@th#1\@secondoftwo#2$\hfil\cr
		}%
}}
\newcommand{\nat}{\mathbb{N}}
\newcommand{\true}[0]{\mathtt{true}}
\newcommand{\false}[0]{\mathtt{false}}
\DeclareFontFamily{OMX}{MnSymbolE}{}
\DeclareSymbolFont{MnLargeSymbols}{OMX}{MnSymbolE}{m}{n}
\DeclareFontShape{OMX}{MnSymbolE}{m}{n}{
    <-6>  MnSymbolE5
   <6-7>  MnSymbolE6
   <7-8>  MnSymbolE7
   <8-9>  MnSymbolE8
   <9-10> MnSymbolE9
  <10-12> MnSymbolE10
  <12->   MnSymbolE12
}{}
\DeclareFontShape{OMX}{MnSymbolE}{b}{n}{
    <-6>  MnSymbolE-Bold5
   <6-7>  MnSymbolE-Bold6
   <7-8>  MnSymbolE-Bold7
   <8-9>  MnSymbolE-Bold8
   <9-10> MnSymbolE-Bold9
  <10-12> MnSymbolE-Bold10
  <12->   MnSymbolE-Bold12
}{}
\let\llangle\@undefined
\let\rrangle\@undefined
\DeclareMathDelimiter{\llangle}{\mathopen}%
                     {MnLargeSymbols}{'164}{MnLargeSymbols}{'164}
\DeclareMathDelimiter{\rrangle}{\mathclose}%
                     {MnLargeSymbols}{'171}{MnLargeSymbols}{'171}
\definecolor{dkcyan}{rgb}{0.1, 0.3, 0.3}
\definecolor{dkgreen}{rgb}{0,0.3,0}
\definecolor{olive}{rgb}{0.5, 0.5, 0.0}
\definecolor{dkblue}{rgb}{0,0.1,0.5}
\definecolor{col:ln}{rgb}  {0.1, 0.1, 0.7}
\definecolor{col:str}{rgb} {0.8, 0.0, 0.0}
\definecolor{col:db}{rgb}  {0.9, 0.5, 0.0}
\definecolor{col:ours}{rgb}{0.0, 0.7, 0.0}
\definecolor{lightgreen}{RGB}{170, 255, 220}
\definecolor{darkbrown}{RGB}{121,37,0}
\colorlet{listing-comment}{gray}
\colorlet{operator-color}{darkbrown}
\colorlet{comment-color}{black!50}
\lstdefinelanguage{custom-lang}{
	keywords={let, in, where, match, with, when, if, then, else, for, repeat, return, to, do, from},
	keywordstyle=[1]\color{dkblue},
	morekeywords=[2]{verify, systemToNBA, LTLtoNBA, eProduct, uProduct},
	keywordstyle=[2]\color{dkgreen},
    morekeywords=[3]{underApprox, overApprox},
	keywordstyle=[3]\color{darkbrown},
	comment=[l][\color{comment-color}]{//},
	literate=%
	{=}{{{\color{operator-color}=}}}1
	{|}{{{\color{dkblue}|}}}1
	{:}{{{\color{dkblue}:}}}1
	{:=}{{{\color{dkblue}:=}}}1
    {@}{ }1
}
\lstdefinestyle{default}{
	escapeinside={(*}{*)},
	basicstyle=\ttfamily\fontsize{9.3}{10.3}\selectfont,
	columns=fullflexible,
	commentstyle=\sffamily\color{black!50!white},
	framexleftmargin=1em,
	framexrightmargin=1ex,
	keepspaces=true,
	keywordstyle=\color{dkblue},
	mathescape,
	numbers=left,
	numberblanklines=false,
	numbersep=1.25em,
	numberstyle=\relscale{0.8}\color{gray}\ttfamily,
	showstringspaces=true,
	stepnumber=1,
	xleftmargin=2em,
}
\lstdefinelanguage{example-lang}{
	keywords={while,do},
	keywordstyle=[1]\bfseries,
	comment=[l][\color{comment-color}]{//},
	literate=%
	{<-}{{{\color{dkblue}$\leftarrow$}}}1
	{@}{ }1
}
\lstdefinestyle{example-style}{
	escapeinside={(*}{*)},
	basicstyle=\ttfamily\fontsize{8.4}{9.7}\selectfont,
	columns=fullflexible,
	commentstyle=\sffamily\color{black!50!white},
	framexleftmargin=0em,
	framexrightmargin=0ex,
	keepspaces=true,
	keywordstyle=\color{dkblue},
	mathescape,
	numbers=left,
	numberblanklines=false,
	showstringspaces=true,
	stepnumber=1,
	xleftmargin=0em,
	numbers=none
}
\newcounter{claim} 
\renewcommand{\theclaim}{\arabic{claim}}
\crefname{claim}{claim}{claims}
\renewenvironment{proof}[1][Proof]{%
    \par\noindent\textit{#1.}\hspace{0.5em}\ignorespaces%
}{%
    \hfill$\qed$%
    \par\vspace{0.5em}%
}
\newcommand{\abs}[1]{\left\lvert#1\right\rvert}
\newcommand{\last}{\mathit{last}}
\newcommand{\lang}{\mathcal{L}}
\newcommand{\dom}{\mathit{dom}}
\newcommand{\ap}{\mathtt{AP}}
\newcommand{\trace}{\gamma}
\newcommand{\traces}{\mathit{Traces}}
\newcommand{\globally}{\LTLglobally}
\newcommand{\eventually}{\LTLfinally}
\newcommand{\until}{\ U\ }
\newcommand{\weakuntil}{\ W\ }
\newcommand{\nextt}{\LTLnext}
\newcommand{\architecture}{\mathtt{Arc}}
\newcommand{\proc}{\text{Proc}}
\newcommand{\plant}{\texttt{p}}
\newcommand{\plantall}{\mathbb{P}}
\newcommand{\contr}{\texttt{c}}
\newcommand{\contrall}{\mathbb{C}}
\newcommand{\env}{\texttt{e}}
\newcommand{\inputs}[1]{{I_{#1}}}
\newcommand{\outputs}[1]{{O_{#1}}}
\newcommand{\run}{\rho}
\newcommand{\runt}{\mathit{run}}
\newcommand{\runs}{\mathit{Runs}}
\newcommand{\tree}{\mathcal{T}}
\newcommand{\runtree}{\upeta}
\newcommand{\aut}{\mathcal{A}}
\newcommand{\acc}{\Omega}
\newcommand{\safety}{\mathit{safe}}
\newcommand{\reach}{\texttt{reach}}
\newcommand{\strat}{\mathcal{M}}
\newcommand{\moore}{\strat}
\newcommand{\stratoutput}{\mathit{out}}
\newcommand{\labelts}{o}
\newcommand{\univcontr}{\mathcal{U}}
\newcommand{\gencontr}{\univcontr}
\newcommand{\hyperstrat}{\univcontr}
\newcommand{\ltlToAut}{\textsf{LTLtoDSA}}
\newcommand{\prophecy}{\theta}
\newcommand{\safeProphecy}{\mathbb{S}}
\newcommand{\safeProphecyF}[2]{\safeProphecy_{#1}^{#2}}
\newcommand{\prophecyall}{\mathbb{F}}
\newcommand{\bracket}[1]{\langle #1\rangle}
\newcommand{\prototype}{\textsc{unicon}\xspace}
\lstdefinelanguage{custom-lang}{
	keywords={let, in, match, with, when, if, then, else, elif, for, to, do, rec, return, new, not, and, while},
	keywordstyle=[1]\color{dkblue}\bfseries,
	morekeywords=[2]{append, Set, Dict, Queue, pop, push, add, contains},
	keywordstyle=[2]\sffamily,
	morekeywords=[3]{generalController, prophecy, existentialProjection, composition, isMember, stepComposition, modifiedPlant, h},
	keywordstyle=[3]\color{dkcyan}\ttfamily,
	comment=[l][\color{comment-color}]{//},
	literate=%
	{=}{{{\color{operator-color}=}}}1
	{<-}{{{\color{operator-color}$\leftarrow$}}}1
	{|}{{{\color{dkblue}$\mid$}}}1
	{:}{{{\color{dkblue}:}}}1
	{:=}{{{\color{dkblue}:=}}}1
	{@}{ }1
}
\lstdefinestyle{default}{
	escapeinside={(*}{*)},
	basicstyle=,
	columns=fullflexible,
	commentstyle=\sffamily\color{black!50!white},
	framexleftmargin=1em,
	framexrightmargin=1ex,
	keepspaces=true,
	keywordstyle=\color{dkblue},
	mathescape,
	numbers=left,
	numberblanklines=false,
	numbersep=0.5em,
	numberstyle=\relscale{0.75}\color{gray}\ttfamily,
	showstringspaces=true,
	stepnumber=1,
	xleftmargin=1.2em,
}
\begin{document}

\title{Synthesis of Universal Safety Controllers}

 \author{Bernd Finkbeiner\inst{1}\orcidlink{0000-0002-4280-8441} \and Niklas Metzger\inst{1}\orcidlink{0000-0003-3184-6335} \and Satya Prakash Nayak\inst{2}\orcidlink{0000-0002-4407-8681} \and \\ Anne-Kathrin Schmuck\inst{2}\orcidlink{0000-0003-2801-639X}}

 \authorrunning{B. Finkbeiner, N. Metzger, S. P. Nayak, and A. Schmuck}
 %
 \institute{CISPA Helmholtz Center for Information Security, Saarbrücken, Germany\\
 \email{\{finkbeiner,niklas.metzger\}@cispa.de} \and
Max Planck Institute for Software Systems, Kaiserslautern, Germany\\
 \email{\{sanayak,akschmuck\}@mpi-sws.org}\\
 \url{}}

\maketitle

\begin{abstract}
The goal of \emph{logical controller synthesis} is to automatically compute a control strategy that regulates the discrete, event-driven behavior of a given plant s.t.\ a temporal logic specification holds over all remaining traces. Standard approaches to this problem construct a two-player game by composing a given complete plant model and the logical specification and applying standard algorithmic techniques to extract a control strategy. However, due to the often enormous state space of a complete plant model, this process can become computationally infeasible. In this paper, we introduce a novel synthesis approach that constructs a \emph{universal controller} derived solely from the game obtained by the standard translation of the logical specification. The universal controller's moves are annotated with \emph{prophecies} -- predictions about the plant's behavior that ensure the move is safe. By evaluating these prophecies, the universal controller can be adapted to any plant over which the synthesis problem is realizable. This approach offers several key benefits, including enhanced scalability with respect to the plant's size, adaptability to changes in the plant, and improved explainability of the resulting control strategy. We also present encouraging experimental results obtained with our prototype tool, \prototype{}.
\end{abstract}

\section{Introduction}\label{sec:introduction}

Reactive synthesis, in the tradition of Church's Problem~\cite{Church1964}, and
supervisory control, in the tradition of Ramadge and Wonham~\cite{10.1137/0325013}, both aim
at the automatic construction of systems that are guaranteed to
satisfy their formal specifications\footnote{See~\cite{EhlersJdeds,SchmuckJdeds20} on an in-depth discussion of their resulting algorithmic differences.}.  Reactive synthesis is based on a logical
specification, typically given as a formula of a temporal logic;
supervisory control is based on an open system model, known as
the \emph{plant}. In practical applications, we often encounter a
combination of both views, where, for example, the behavioral objectives of a robot
are specified as a temporal formula, while its
surroundings, such as walls, humans, and other robots, as well as the physics of their interaction, are described
by a plant model.
This has lead to approaches which automatically abstract physical behavioral constrains into a plant automaton~\cite{tabuada2009verification,belta2017formal,Review24_FMandControl_YinGaoYu} or model the plant directly as an event-based automaton~\cite{kress2018synthesisForRobots_Review,wonham2018supervisory}, which is then composed with the finite automaton derived from the logical specification. The resulting $\omega$-regular game can then be solved by standard algorithmic techniques from reactive synthesis~\cite{AbadiLW89,JobstmannB06,JobstmannGWB07}. 
This approach is, however, severely limited in its scalability. The necessity to include \emph{all possible} plant behaviors into the plant model to achieve a correct-by-design solution leads to a huge state space of the plant automaton, which makes solving the resulting game computationally impractical.
The situation is even worse if the plant model is not fully known or changes over
time, for example as the robot receives new mapping data, which currently requires a complete re-synthesis. 

An encouraging observation is, however,  that, in practice, it is often not
necessary to analyze the interaction of the controller and the plant
in great detail. In supervisory control, it has been observed that, if
certain conditions are met, the plant already guarantees the desired
behavior entirely on its own, independently of the
controller~\cite{8843071}. Conversely, benchmarks in reactive synthesis, where there
is no explicit plant model, often capture the relevant properties of
the plant with just a small set of assumption formulas~\cite{DBLP:journals/corr/abs-2206-00251}.

Based on this observation, this paper presents a novel 
synthesis approach that decouples the synthesis of a \emph{universal controller}, which we construct purely based on the temporal formula, from the
\emph{adaptation} of the universal controller to the specific plant model. The two phases are connected by an annotation of the
output decisions in the universal controller with a condition about the plant behavior, which we call the
\emph{prophecy} of the output: the controller only chooses the output 
if the plant satisfies the prophecy.  The main advantage of the new approach is the improved scalability due to the fact that the universal controller is
constructed \emph{independently of the plant}. 
We illustrate the idea
with a toy example, which will also serve as the running example of
the paper.

\begin{example}\label{example:running}
Our setting consists of the controller $\contr$, the plant $\plant$, and the external environment $\env$. We assume that each of these three processes has a boolean output proposition: $o_\contr$, $o_\plant$, and $o_\env$, respectively.
At each time step, the processes chose the values for their respective output propositions based on the history of all previous outputs. The temporal specification is given as the LTL formula
\begin{equation}\label{equ:spec:exp}
 \varphi \coloneqq (o_\contr \leftrightarrow o_\plant) \weakuntil (\nextt \neg o_\env).
\end{equation}
The formula requires that the controller matches the output of the plant as long as the environment sets its output to true.
As the controller cannot observe the plant output before setting its own output, it is impossible to build a correct controller independently of the plant:
unless the environment sets its output to false, the controller must know the output produced by the plant one time step in advance.
We express this information with the prophecies $\bracket{o_\plant}$, indicating that the plant will output true, and $\bracket{\neg o_\plant}$, indicating that the plant will output false.
The universal controller then reacts to $\bracket{o_\plant}$ by setting $o_\contr$ to true and to $\bracket{\neg o_\plant}$ by setting $o_\contr$ to false. 
\end{example}

\subsubsection*{Contribution.}
In the remainder of the paper, we formally introduce the concepts of universal
controllers and prophecies.
We provide an algorithm to synthesize 
universal controllers from safety specifications, with a finite representation of the  
prophecies as tree automata. Furthermore, we show that 
universal controllers are also \emph{most permissive}, capturing all
controller strategies that are correct for any realizable plant
model. We present an exploration algorithm to obtain a
controller from the universal controller for a specific plant
model. We show that the algorithm is sound and complete for safety
specifications. Finally, we provide experimental results based on our
prototype tool, \prototype{}~\cite{tool}, showing that our approach scales better
than the standard method on synthesis problems with large plant models.

 \subsubsection*{Related Work.}

One key feature of the universal controller is its \emph{permissiveness}: it captures all control strategies that are correct for any realizable plant.
The concept of permissiveness is well-established in the field of supervisory control, where it is \emph{the} key feature of the computed supervisor~\cite{cassandras2021introduction}. 
However, here permissiveness is understood w.r.t.\ enabling all behaviors that a \emph{single plant} fulfills. In the same spirit, permissive solutions have also been considered in the field of reactive synthesis~\cite{bernet2002:permissiveStrategies,bouyer2011:measuringpermissiveness,FremontSeshi_Improvisation}, however mostly to ensure variability of resulting implementations. Recently, permissiveness for control applications of reactive synthesis has been considered~\cite{Klein2015:mostGeneralController,AnandNS23}, but also w.r.t.\ one fixed plant model.

Another line of work focuses on synthesizing control strategies that are correct for a particular set of plant models. 
For example, \emph{dominant strategies}~\cite{DammF14,FinkbeinerP22} are correct for all realizable plants, whereas \emph{admissible strategies}~\cite{Berwanger07,BassetRS17} are correct for a maximal set of realizable plants. Other works~\cite{AnandMNS23,BrenguierRS17,DBLP:conf/cav/FinkbeinerMM22,DBLP:conf/cav/FinkbeinerMM24} restrict plant models with reasonable assumptions and synthesize a control strategy that is correct for this restricted set of plants.
Nevertheless, all these works are restricted by computing \emph{a single control strategy} that must be correct for the entire set of considered plants, whereas our approach computes a universal controller that can be adapted to \emph{different control strategies} depending on the plant model.

Finally, the use of prophecy variables is a common proof technique to make information about future events accessible~\cite{AbadiL91}. A recent work by Beutner and Finkbeiner~\cite{BeutnerF22} uses such prophecies to provide a complete algorithm for hyperproperty verification. Inspired by this work, we use prophecies to guess the future behavior of the plant in order to synthesize a universal controller.

\section{Preliminaries}
For every set $X$, we write $X^*$ and $X^\omega$ to denote the sets of finite and respectively infinite, sequences of elements of $X$, and let $X^\infty = X^* \cup  X^\omega$. 
For $\run\in X^\infty$, we denote $\abs{\run}\in\nat\cup \{\infty\}$ the length of $\run$, and define $\dom(\run)\coloneqq \{0,1,\ldots,\abs{\run}-1\}$.
For $\run = x_0x_1\cdots\in X^\infty$ and $i,j\in\dom(\run)$ with $i\leq j$, we define $\rho[i]\coloneqq x_i$ and $\rho[i,j]\coloneqq x_i\cdots x_j$. 
Moreover, $\last(\run)$ denotes the last element of $\run$.
For $\run\in X^\infty$ and some set $Y$, we write $\run\downarrow_Y$ to denote the sequence $\run'$ with $\run'[i] = \run[i]\cap Y$.

\medskip
\noindent\textit{System Architectures.}
In this paper, we focus on systems with three processes, $\proc = \{\contr,\plant,\env\}$, i.e., a controller, a plant, and an environment.
For the rest of the paper, we fix a set of atomic propositions $\ap = \outputs{\env}\uplus\outputs{\contr}\uplus\outputs{\plant}$ partitioned into the output propositions of the processes.
This defines an architecture $\architecture = (\inputs{\env},\outputs{\env},\inputs{\contr},\outputs{\contr},\inputs{\plant},\outputs{\plant})$ over $\ap$ where the input propositions of a process $i\in\proc$ are the union of the output propositions of all other processes, i.e., $\inputs{i} = \bigcup_{j\neq i}\outputs{j}$.

\medskip
\noindent\textit{Traces and Specifications.}
We fix alphabet $\Sigma = 2^{\ap}$ to the power set of the atomic propositions $\ap$.
A \emph{trace} over $\Sigma$ is a sequence $\trace\in \Sigma^\omega$.
A \emph{specification} $\varphi$ specifies some restrictions on the traces. 
We write $\trace \vDash\varphi$ to denote that the trace $\trace$ satisfies the specification $\varphi$.
The language $\lang(\varphi)$ of a specification $\varphi$ represents the set of traces that satisfy~$\varphi$.

\medskip
\noindent\textit{LTL.}
Linear-time temporal logic~(LTL)~\cite{Pnueli77} is a specification language for linear-time properties. 
LTL specifications over atomic propositions $\ap$ are given by
$ \varphi, \psi ::= \alpha\in\ap ~ | ~ \neg \varphi ~ | ~ \varphi \lor \psi ~ | ~ \varphi \land \psi ~ | ~  \nextt \varphi ~ | ~ \varphi \until \psi ~ | ~ \varphi \weakuntil \psi ~ | ~ \eventually \varphi ~ | ~ \globally \varphi$.
A trace is evaluated against an LTL specification $\varphi$ in the ususal way. We refer to~\cite[Chapter~5.1.2]{baier2008principles} for the detailed semantics. 

\medskip
\noindent\textit{Process Strategies.}
We model a strategy for process ${i}$ as a Moore machine~$\moore = (S, s_0,\tau,\labelts)$ over inputs $\inputs{i}$ and outputs $\outputs{i}$, consisting of a finite set of states $S$, an initial state $s_0 \in S$, a transition function $\tau: S \times 2^\inputs{i} \rightarrow S$ over inputs, and an output labeling function $\labelts: S \rightarrow 2^\outputs{i}$.
For a finite input sequence $\trace = \alpha_0\alpha_1\cdots\alpha_{k-1}\in (2^{\inputs{i}})^*\!$, $\moore$ produces a finite path $s_0s_1\cdots s_k$ and an output sequence $\labelts(s_0)\labelts(s_1)\cdots\labelts(s_k)\in (2^{\outputs{i}})^*$ such that $\tau(s_j, \alpha_j) = s_{j+1}$.
We write $\stratoutput(\strat,\trace)$ to denote the last output produced, i.e., $\labelts(s_k)$.
Similarly, for an infinite input sequence $\trace\in (2^{\inputs{i}})^\omega\!$, $\strat$ produces an infinite path $s_0s_1\cdots$ and an infinite output sequence $\labelts(s_0)\labelts(s_1)\cdots\in (2^{\outputs{i}})^\omega$.
We write $\traces(\strat)$ to denote the set of all traces $\trace\in\Sigma^\omega$ such that for input sequence $\trace\downarrow_{\inputs{i}}$, $\strat$ produces the output sequence $\trace\downarrow_{\outputs{i}}$.
We say that $\strat$ satisfies a specification~$\varphi$, denoted $\strat \models \varphi$, if $\trace \models \varphi$ holds for all traces $\trace\in\traces(\strat)$.

For ease of readability, we refer to the plant strategies as \emph{plants} and the controller strategies as \emph{controllers}.
We collect all plants in the set $\plantall$ and all controllers in the set $\contrall$.
The parallel composition $\moore_{i} \parallel \moore_j$ of two strategies $\moore_{i} = (S_i,s^i_0,\tau_i,o_i)$, $\moore_j = (S_j,s^j_0,\tau_j,o_j)$ of processes $i, j\in\proc$ is a strategy, i.e., a Moore machine, $(S,s_0,\tau,o)$ over inputs $(\inputs{i} \cup \inputs{j})\setminus(\outputs{i}\cup \outputs{j})$ and outputs $\outputs{i} \cup \outputs{j}$ with $S = S_i \times S_j$, $s_0=(s^i_0,s^j_0)$, 
$\tau((s,s'),\sigma) = (\tau_i(s,(\sigma \cup \labelts_j(s')) \cap \inputs{i}),\tau_j(s',(\sigma \cup \labelts_i(s))\cap\inputs{j}))$, and 
$\labelts((s,s')) = \labelts_i(s) \cup \labelts_j(s')$.

We also represent a strategy $\moore$ over inputs $I$ and outputs $O$ as a $2^{O}$-labeled $2^{I}$-tree which is a function $\tree : (2^{I})^* \rightarrow 2^{O}$.
The tree maps input sequences to their last output, i.e., for all input sequences $\trace\in (2^{I})^*$, $\tree(\trace) = \stratoutput(\moore,\trace)$.

\medskip
\noindent\textit{$\omega$-Automata.}
A (deterministic) $\omega$-\emph{automaton} $\aut$ (over alphabet $\Sigma$) is a tuple $(Q,q_0,\delta,\Omega)$ consisting of a finite set of states $Q$, an initial state $q_0\in Q$,
a transition function $\delta \colon Q\times \Sigma \rightarrow Q$, and an acceptance condition $\acc\subseteq Q^\omega$.
The unique \emph{run} of $\aut$ from state $q$ on some trace $\trace\in \Sigma^\infty$, denoted by $\runt(\aut,q,\trace)$, is a sequence of states $\run\in Q^\infty$ with $\abs{\run} =\abs{\trace}+1$, $\run[0] = q$, and $\delta(\run[i],\trace[i]) = \run[i+1]$ for all $i\in\dom(\trace)$. 
A run $\run$ is \emph{accepting} if $\run\in \acc$.
The language $\mathcal{L}(\aut)$ is the set of all traces $\trace$ for which the unique run $\runt(\aut,q_0,\trace)$ is accepting.

A (deterministic) safety automaton (DSA) is an $\omega$-automaton with a safety acceptance condition $\acc = \safety(F)$, where $F\subseteq Q$ is a set of safe states, containing the set of runs that only visits states in $F$.
It is known that an LTL specification $\varphi$ can be translated into an equivalent $\omega$-automaton $\aut$, i.e., with $\lang(\varphi) = \lang(\aut)$, with a double exponential blow-up~\cite{baier2008principles}.
A safety LTL specification $\varphi$ is an LTL specification for which the corresponding automaton $\aut$ is a safety automaton.
We write $\ltlToAut$ to refer to the procedure that implements this translation from a safety LTL specification to an equivalent safety automaton~\cite{safeLtl,DBLP:conf/cav/KupfermanV99}.

Given an automaton $\aut = (Q,q_0,\delta,\acc)$ and a Moore machine $\moore$ over inputs $I$ and outputs $O$ with $2^{I\cup O}\subseteq \Sigma$, 
we define the composition $\aut \times \moore = (Q\times S, (q_0,s_0), \delta')$ as a product automaton with partial transition function $\delta': (Q\times S)\times 2^I \rightarrow (Q\times S)$ defined as $\delta'((q,s),\sigma) = (\delta(q,\sigma\cup\labelts(s)),\tau(s,\sigma))$.
We write $\runs(\aut,q,\moore)$ to denote the set of runs $\run$ of $\aut$ starting from $q$ for which there exists a corresponding run $\run'$ of $\aut\times\moore$ with $\run'[i] = (\run[i],s_i)$ for each $i\geq 0$.
We write $\reach(\aut\times\moore)$ to denote the set of states $(q,s)$ that are reachable from the initial state $(q_0,s_0)$ in $\aut\times\moore$, i.e., there exists a run $\run$ of $\aut\times\moore$ with $\run[0] = (q_0,s_0)$ and $\run[k] = (q,s)$ for some $k\geq 0$.

\medskip
\noindent\textit{Tree Automata.}
A tree automaton over $2^{O}$-labeled $2^{I}$-trees is a tuple $\aut = (Q,q_0,T,\acc)$ consisting of a finite set of states $Q$, an initial state $q_0\in Q$, a transition function $T\subseteq Q\times 2^{O} \times (2^{I} \rightarrow Q)$, and an acceptance condition $\acc\subseteq Q^\omega$.

A run of a tree automaton $\aut$ on a tree $\tree: (2^I)^*\rightarrow 2^O$ is a $Q$-labeled $2^{I}$-tree $\runtree : (2^{I})^* \rightarrow Q$ with $\runtree(\epsilon) = q_0$ and for every input sequence $\trace\alpha \in (2^{I})^*$, there exists $(\runtree(\trace), \tree(\trace), f) \in T$ such that $f(\alpha) = \runtree(\trace\alpha)$. 
Given an input sequence $\alpha_0\alpha_1\cdots\in (2^{I})^*$, a run (tree) $\runtree : (2^{I})^* \rightarrow Q$ produces a path, i.e., a sequence of states, $\runtree(\epsilon)\runtree(\alpha_0)\runtree(\alpha_0\alpha_1)\cdots$.
The run (tree) is accepting if every path produced by the run is accepting by the acceptance condition $\acc$.
We define the language of a tree automaton $\lang(\aut)$ as the set of all accepting trees.

\section{Universal Controllers}\label{sec:general-controller}
Logical controller synthesis is the problem of, given a plant and an environment, finding a controller that satisfies, together with the plant, some specification.
More specifically, for a given architecture, an LTL specification $\varphi$ over $\ap$, and a plant strategy $\strat_\plant$, the problem asks to find a controller strategy $\strat_\contr$ s.t.\ $\strat_\contr \parallel \strat_\plant \vDash \varphi$. 
We formalize the example from \Cref{example:running} as follows:

\begin{figure}[t]
    \begin{subfigure}[t]{.5\textwidth}
        \centering
        \tikzstyle{state}=[draw, circle, fill=none, minimum width=0.7cm, 
minimum height = 0.7cm,
align=center, thick]

\begin{tikzpicture}[->,>=stealth',shorten >= 1pt,auto]

\node[state] (p0) at (0,0){%
    $~q_0$
};
\node[state] (p1) at (2,1.5){%
    $~q_1$
};
\node[state] (p2) at (3,0){%
    $~q_2$
};

\node[state, draw=none] (init)[left = 0.7 of p0]{%
};

\path (init) edge[thick] (p0)
(p0) edge[thick,bend left=20] node[above left=-0.1] {%
  $\neg o_\env$
  } (p1)
(p0) edge[loop above,thick] node[left] {%
  $o_\env \!\wedge\! (o_\plant\! \leftrightarrow\! o_\contr)$
  } ()
(p1) edge[loop right,thick] node[right] {%
  $\true$
  } ()
(p0) edge[thick] node[above] {%
  $o_\env \!\wedge \!(\neg o_\plant \!\leftrightarrow\! o_\contr)$
  } (p2)
(p2) edge[loop right,thick] node[above=0.5em] {%
  $\true$
  } ()
;

\end{tikzpicture}
    \end{subfigure}
    \hspace*{0.5cm}
    \begin{subfigure}[t]{.4\textwidth}
        \centering
        \tikzstyle{state}=[draw, circle, fill=none, minimum width=0.7cm, 
minimum height = 0.7cm,
align=center, thick]

\begin{tikzpicture}[->,>=stealth',shorten >= 1pt,auto]

\node[state] (p0){%
    $~s_0$
};
\node[state] (p1) [right = 1 of p0]{%
    $~s_1$
};

\node [below=0.1em of p0] (l0) {$\textcolor{labels}{o_\plant}$};
\node [below=0.1em of p1] (l1) {$\textcolor{labels}{\neg o_\plant}$};

\node[state, draw=none] (init)[left = 0.7 of p0]{%
};

\path (init) edge[thick] (p0)
(p0) edge[bend left=20,thick] node[above] {%
  $\neg o_\env$
  } (p1)
(p1) edge[bend left=20,thick] node[below] {%
  $o_\env$
  } (p0)
(p0) edge[loop above,thick] node[above] {%
  $o_\env$
  } ()
(p1) edge[loop above,thick] node[above] {%
  $\neg o_\env$
  } ()

      ;
\end{tikzpicture}
    \end{subfigure}   
    \vspace{-0.8cm}
    \caption{Left: the (safety) automaton~$\aut_\varphi=\ltlToAut(\varphi)$ with safe states $F = \{q_0,q_1\}$ representing the specification $\varphi$ in~\eqref{equ:spec:exp}. Right: a plant strategy $\strat_\plant$.}
    \label{fig:running:example}
    \label{fig:running:automaton}
    \label{fig:running:plant}
\end{figure}

\begin{example}\label{example:automaton}
Consider the architecture over $\ap = \{o_\contr,o_\plant,o_\env\}$ with $\outputs{\env} = \{o_\env\}$, $\outputs{\contr} = \{o_\contr\}$, $\outputs{\plant} = \{o_\plant\}$, and  LTL specification $\varphi$ in~\Cref{equ:spec:exp}.
As discussed in \Cref{example:running}, the specification requires that the controller needs to match the output of the plant as long as the environment outputs $o_\env$.
This is also illustrated by the automaton $\aut_\varphi$ depicted in \cref{fig:running:automaton} (left) which accepts $\mathcal{L}(\varphi)$ over the alphabet $\Sigma = 2^{\ap}$.
We want to synthesize a controller strategy $\strat_\contr$ that is correct w.r.t.\ a given plant strategy $\strat_\plant$, i.e., the specification is realizable by the composition of the plant and the controller.
\end{example}

In general, infinitely many plants exist for which one can find a controller such that the specification over the overall architecture becomes realizable. 
We call such plants \emph{admissible}. 

\begin{definition}[Admissibility]\label{def:admissibility}
    Given a specification $\varphi$, a plant $\strat_\plant$ is said to be \emph{admissible} for $\varphi$ if there exists a controller $\strat_\contr$ s.t. $\strat_\contr \parallel \strat_\plant \vDash \varphi$. 
\end{definition}

Given a plant strategy $\strat_\plant$ and a specification $\varphi$, the \emph{controller synthesis problem} asks to construct a controller strategy $\strat_\contr$ s.t. $\strat_\contr \parallel \strat_\plant \vDash \varphi$, or determine that no such controller exists, i.e., to conclude that the synthesis problem is \emph{unrealizable}. The latter is true iff the plant $\strat_\plant$ is not admissible.
Given a plant strategy $\strat_\plant$, typical controller synthesis techniques explore the entire state space of $\strat_\plant$ to either construct a controller or conclude unrealizability. In practice, such plant representations can be very large, as they might represent abstractions of actual implementations or physical system dynamics with a large number of states and propositions. This quickly leads to a large and infeasible runtime of synthesis algorithms.

The major observation of this work is that this exploration can be circumvented by solving the controller synthesis problem \enquote{abstractly} without considering a particular plant strategy. We achieve this by generating future assumptions about the plant. We call such future assumptions \emph{prophecies}. 

\begin{definition}[Prophecies]\label{def:prophecy}
    A prophecy $\prophecy\subseteq\plantall$ is a set of plants. 
\end{definition}

Intuitively, we use prophecies to condition the current controller output on the future behavior of the plant strategy. 
To illustrate this, consider again the controller synthesis problem from \Cref{example:running}. 
As future behavior of the environment is unknown,
for any admissible plant, the controller needs to output $o_\contr$ in the first time-step if the plant also outputs $o_\plant$ in the first time-step, else it needs to output $\neg o_\contr$.
However, the controller cannot observe the plant output before it outputs $o_\contr$ or $\neg o_\contr$. Hence, the controller needs to use a prophecy about the plant to decide on its output.
Note that the plant strategy can be very complex, involving a lot of other behavior and propositions. By synthesizing a controller which conditions its output on a \emph{prophecy}, we extract the information about the future of the plant relevant for controller synthesis for a given specification and architecture, independently of the concrete plant realization. When given a concrete plant strategy, we then verify all prophecies to pick the correct controller output in every time step. We call such a prophecy-annotated controller a \emph{universal controller}, as formalized next.

\begin{definition}[Universal controller]
A \emph{universal controller} $\univcontr$ over alphabet $\Sigma$ and prophecy set $\prophecyall$ is a tuple $(S, s_0,\tau,\kappa)$ consisting of a finite set of states $S$, an initial state $s_0 \in S$, a transition function $\tau: S \times \Sigma \rightarrow S$, and a prophecy annotation $\kappa: S \times 2^{\outputs{\contr}} \rightarrow \prophecyall$. 
\end{definition}

Note that the transition function of a universal controller is over the whole alphabet~$\Sigma$, instead of just inputs of the controller.
Furthermore, each state is annotated with an additional \emph{condition} by a prophecy for each controller output.
Once we obtain a plant, we can verify these prophecies to pick the correct controller output for each state and set the labelling function of the controller accordingly.
Once the labelling function is set, this also refines the transition function into a function over the inputs only.

\begin{figure}[t]
    \newcommand{\arrow}{\twoheadleftarrow}
    \centering
    \begin{subfigure}[c]{.59\textwidth}
        \tikzstyle{state}=[draw, circle, fill=none, minimum width=0.7cm, 
minimum height = 0.7cm,
align=center, thick]

\begin{tikzpicture}[->,>=stealth',shorten >= 1pt,auto]

\node[state] (p0) at (0,0){%
    $~q_0$
};
\node[state] (p1) at (2,1.5){%
    $~q_1$
};
\node[state] (p2) at (3,0){%
    $~q_2$
};

\node[draw=none] (init)[left = 0.7 of p0]{%
};

\path (init) edge[thick] (p0)
(p0) edge[thick,bend left=20] node[above left=-0.1] {%
  $\neg o_\env$
  } (p1)
(p0) edge[loop above,thick] node[left] {%
  $o_\env \!\wedge\! (o_\plant\! \leftrightarrow\! o_\contr)$
  } ()
(p1) edge[loop right,thick] node[right] {%
  $\true$
  } ()
(p0) edge[thick] node[above] {%
  $o_\env \!\wedge \!(\neg o_\plant \!\leftrightarrow\! o_\contr)$
  } (p2)
(p2) edge[loop right,thick] node[above=0.5em] {%
  $\true$
  } ()
;

\node [below=0.2em of p0] (l0) {$\textcolor{labels}{~~o_\contr}\arrow \bracket{o_\plant}$};
\node [above=0.2em of p1] (l1) {$\textcolor{labels}{~~~~~o_\contr}\arrow\bracket{\true}$};
\node [below=0.2em of p2] (l2) {$\textcolor{labels}{~~o_\contr}\arrow\bracket{\false}$};
\node [below=1.2em of p0] (l0) {~$\textcolor{labels}{\neg o_\contr}\arrow \bracket{\neg o_\plant}$};
\node [above=1.2em of p1] (l1) {$\textcolor{labels}{~~~\neg o_\contr}\arrow\bracket{\true}$};
\node [below=1.2em of p2] (l2) {$\textcolor{labels}{\neg o_\contr}\arrow\bracket{\false}$};

\end{tikzpicture}
    \end{subfigure}
    \begin{subfigure}[c]{.39\textwidth}
        \tikzstyle{state}=[draw, circle, fill=none, minimum width=0.7cm, 
minimum height = 0.7cm,
align=center, thick]

\begin{tikzpicture}[->,>=stealth',shorten >= 1pt,auto]

\node[state] (p0){%
    $~q_0s_0$
};
\node[state] (p1) [right = 1 of p0]{%
    $~q_1s_1$
};

\node[draw=none] (init)[left = 0.7 of p0]{%
};

\node [below=0.1em of p0] (l0) {$\textcolor{labels}{o_\contr}$};
\node [below=0.1em of p1] (l1) {$\textcolor{labels}{\neg o_\contr}$};

\path (init) edge[thick] (p0)
(p0) edge[thick] node[above] {%
  $\neg o_\env$
  } (p1)
(p0) edge[loop above,thick] node[above] {%
  $o_\env$
  } ()
(p1) edge[loop above,thick] node[above] {%
  $\true$
  } ()

      ;
\end{tikzpicture}
    \end{subfigure}
    \vspace{-0.8cm}      
    \caption{Left: a universal controller $\univcontr$ with prophecy annotation $\kappa$ (shown by $\arrow$) such that 
    $\kappa(q_1,*) = \bracket{\true}$ (i.e., the set of all plants),
    $\kappa(q_2,*) = \bracket{\false}$ (i.e., the empty set), 
    $\kappa(q_0,o_\contr) = \bracket{o_\plant}$ (i.e., the set of all plants that output $o_\plant$ in the current time-step), and
    $\kappa(q_0,\neg o_\contr) = \bracket{\neg o_\plant}$ (i.e., the set of all plants that output $\neg o_\plant$ in the current time-step).
    Right: a controller strategy $\strat_\contr$ consistent with this universal controller w.r.t.\ the plant shown in \Cref{fig:running:plant} (right).} 
    \label{fig:running:universal}
    \label{fig:running:controller}
\end{figure}

\begin{example}
Consider again the controller synthesis problem from \Cref{example:running} and a universal controller $\univcontr$ as shown in \Cref{fig:running:universal} (left). 
In the first time-step, as future behavior of the environment is unknown, the controller must set $o_\contr$ to true if
the plant also sets $o_\plant$ to true.
Hence, the prophecy for the output $o_\contr$ at the initial state $q_0$ in \Cref{fig:running:universal} (left), i.e., $\kappa(q_0,o_\contr) = \bracket{o_\plant}$, is  the set of all plants that output $o_\plant$ in the current time-step.
Similarly, the prophecy for the output $\neg o_\contr$ at state $q_0$, i.e., $\kappa(q_0,\neg o_\contr) = \bracket{\neg o_\plant}$, is the set of all plants that output $\neg o_\plant$ in the current time-step.
Furthermore, once the environment outputs $\neg o_\env$, we reach the state $q_1$ where the controller can set any output as the specification is satisfied no matter what the controller does. The prophecy for both outputs $o_\contr$ and $\neg o_\contr$ at state $q_1$ is $\bracket{\true}$, i.e., it can set either output for any plant.
Similarly, if the environment outputs $o_\env$, and the output of the plant and the controller does not match, we reach the state $q_2$ where the specification is violated no matter what the controller does. Hence, the prophecy for both output $o_\contr$ and $\neg o_\contr$ at state $q_2$ is $\bracket{\false}$, i.e., there is no output that is correct for any plant.
\end{example}

Such a universal controller is \emph{universal} in the sense that it represents a set of controllers for each plant as formalized next.
\begin{definition}[Consistency]\label{def:consistency}
    Given a universal controller $\univcontr = (S, s_0,\tau,\kappa)$ and a plant $\strat_\plant$, a controller $\strat_\contr = (S^\contr, s_0^\contr,\tau^\contr,\labelts^\contr)$ is said to be \emph{consistent} with $\univcontr$ w.r.t. $\strat_\plant$, denoted by $\strat_\contr\vDash\univcontr\parallel \strat_\plant$, 
    if for all $(s,s^\plant,s^\contr)\in \reach(\univcontr\times(\strat_\plant\parallel\strat_\contr))$, it holds that $\strat_\plant(s^\plant)\in \kappa(s,\labelts^\contr(s^\contr))$.
\end{definition}
Intuitively, at each state of the universal controller, the corresponding plant state satisfies the prophecy for the controller output from that state.
Given a universal controller $\univcontr$ and a plant $\strat_\plant$, one can find a controller $\strat_\contr$ such that $\strat_\contr\vDash\univcontr\parallel \strat_\plant$ (if one exists) by verifying the plant against each prophecy in the current state of the universal controller and
picking a controller output for which the corresponding prophecy is 
satisfied by the plant.

\begin{example}
    Consider the universal controller $\univcontr$ shown in \Cref{fig:running:universal} (left) and the plant strategy $\strat_\plant$ shown in \Cref{fig:running:plant} (right).
    We can construct a controller strategy $\strat_\contr$ that is consistent with $\univcontr$ w.r.t. $\strat_\plant$ by verifying the prophecies as follows.
    From the initial state $q_0$, the plant $\strat_\plant$ sets $o_\plant$ to true and hence, satisfies the prophecy $\kappa(q_0, o_\contr)$. Thus, a consistent controller has to set $o_\contr$ to true in the first time-step.
    Later on from state $q_0$, the plant $\strat_\plant$ continues setting $o_\plant$ to true as long as the environment outputs $o_\env$.
    Hence, a consistent controller also has to set $o_\contr$ to true as long as the environment outputs $o_\env$.
    As soon as the environment outputs $\neg o_\env$, we reach state $q_1$ in the universal controller $\univcontr$ from which the prophecy for $o_\contr$ and $\neg o_\contr$ are $\bracket{\true}$, and hence, the controller can set any output.
    By picking the prophecy $\kappa(q_1,\neg o_\contr)$, the controller $\strat_\contr$ sets $o_\contr$ to false for all future time-steps. 
    With this, we obtain the controller strategy $\strat_\contr$ shown in \Cref{fig:running:controller} (right) which is consistent with $\univcontr$ w.r.t. $\strat_\plant$.
    Note that as we set controller outputs by verifying the prophecies for the plant $\strat_\plant$, we never had to verify prophecies at the state $q_2$ in the universal controller $\univcontr$ where the specification is violated.
\end{example}

Given the above representation of controllers by universal controllers, our main goal is to find a universal controller such that given any admissible plant, the controller strategy consistent with them solves the logical controller synthesis problem. We call such a universal controller \emph{correct} as formalized below.

\begin{definition}[Correctness]\label{def:correctness:of:universal}
    Given a specification~$\varphi$, a universal controller $\hyperstrat$ is \emph{correct} for $\varphi$ if for all admissible plant strategies $\strat_\plant$, it holds that $\strat_\contr\vDash \univcontr\parallel \strat_\plant$ implies 
    $\strat_\plant \parallel \strat_\contr \vDash \varphi$.
\end{definition}

Conversely, if all correct controllers for an admissible plant are consistent with the universal controller, we call such a universal controller \emph{most permissive}.

\begin{definition}[Permissiveness]\label{def:permissive:of:universal}
    Given a specification~$\varphi$, a universal controller $\hyperstrat$ is \emph{most permissive} for $\varphi$ if for all admissible plant strategies $\strat_\plant$, it holds that $\strat_\plant \parallel \strat_\contr \vDash \varphi$ implies $\strat_\contr\vDash \univcontr\parallel \strat_\plant$.
\end{definition}

\begin{example}
    Consider the universal controller $\univcontr$ shown in \Cref{fig:running:universal} (left), the plant strategy $\strat_\plant$ shown in \Cref{fig:running:plant} (right), and the controller strategy $\strat_\contr$ shown in \Cref{fig:running:controller} (right) that is consistent with $\univcontr$ w.r.t. $\strat_\plant$.
    The universal controller $\univcontr$ is correct and most permissive for the specification $\varphi$ in~\eqref{equ:spec:exp} and hence, the controller strategy $\strat_\contr$ solves the logical controller synthesis problem for $\varphi$ with given plant strategy~$\strat_\plant$.
\end{example}

We define the formal problem statement as follows:

\begin{definition}[Universal controller synthesis]\label{prob:general:controller}
    Given a specification $\varphi$, the problem of \emph{universal controller synthesis} is to find a universal controller that is correct and most permissive for $\varphi$.
\end{definition}

Computing a universal controller reduces to the problem of computing suitable prophecies. 
The following section presents an automata-based algorithm to compute prophecies for safety specifications. These prophecies build the foundation of our algorithm for synthesizing universal controllers.

\section{Prophecies for Controller Synthesis}\label{sec:propehcies}
In this section, we introduce a class of prophecies, namely \emph{safe prophecies}, which are sufficient to construct a universal controller for safety specifications that is both correct and most permissive.

\subsection{Prophecies}
The class of safe prophecies only considers the current and next state of execution to capture the safety part of the given specification.
Intuitively, for the current choice of the controller output, a safe prophecy collects a set of plants for which this choice is \emph{safe}, i.e., it can lead to an accepting run in the automaton. 
We define these prophecies for a given automaton:

\begin{definition}\label{def:safe:prophecies}
    Let $\aut = (Q,q_0,\delta,\acc)$ be an automaton, $q\in Q$ be a state in $\aut$,  and $\alpha\in 2^{\outputs{\contr}}$ be an output of the controller. 
    The \emph{safe prophecy} $\safeProphecy_\aut^{q,\alpha}$ is the set
    \[ \safeProphecyF{\aut}{q,\alpha} = \{\strat_\plant \in \plantall \mid  \exists \strat_\contr\in\contrall .\ \stratoutput(\strat_\contr,\epsilon) = \alpha \wedge \runs(\aut,q,\strat_\plant\parallel\strat_\contr)\subseteq\acc\}.\]
\end{definition}

Given a state $q\in Q$ and a controller output $\alpha\in 2^{\outputs{\contr}}$, the set $\safeProphecy_\aut^{q,\alpha}$ collects all plant strategies for which there exists a controller strategy $\strat_\contr$ from $q$ that is consistent with the current output $\alpha$ and produces only accepting runs with $\strat_\plant$ in the automaton, i.e., $\runs(\aut,q,\strat_\plant\parallel\strat_\contr)\subseteq\acc$.
Intuitively, the safe prophecy of a controller's choice collects all the plant strategies for which this choice is safe.
Hence, for any safety LTL formula, safe prophecies provide necessary restrictions on the controller's choices.

\begin{restatable}{theorem}{restateNecessary}\label{thm:safeprophecy-necessary}
    Let $\varphi$ be a safety LTL specification and $\aut = (Q,q_0,\delta,\acc = \safety(F))$ its equivalent safety automaton.  
    The universal controller $\univcontr = (Q,q_0,\delta,\kappa)$ with $\kappa(q,\alpha) = \safeProphecyF{\aut}{q,\alpha}$ is most permissive for $\varphi$.
\end{restatable}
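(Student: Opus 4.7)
The plan is to unfold \Cref{def:permissive:of:universal,def:consistency} and prove that at every reachable state of $\univcontr \times (\strat_\plant \parallel \strat_\contr)$, the current (shifted) plant belongs to the corresponding safe prophecy. So I would fix an admissible plant $\strat_\plant$ and a controller $\strat_\contr$ with $\strat_\plant \parallel \strat_\contr \vDash \varphi$, pick any reachable $(s, s^\plant, s^\contr) \in \reach(\univcontr \times (\strat_\plant \parallel \strat_\contr))$, and exhibit a witness establishing that the plant $\strat_\plant$ re-initialised at $s^\plant$ lies in $\safeProphecyF{\aut}{s,\labelts^\contr(s^\contr)}$.

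The witness is the natural shifted controller: let $\strat_\plant'$ and $\strat_\contr'$ be the Moore machines obtained from $\strat_\plant$ and $\strat_\contr$ by moving their initial states to $s^\plant$ and $s^\contr$, respectively. Then $\stratoutput(\strat_\contr', \epsilon) = \labelts^\contr(s^\contr)$, which matches the initial-output requirement in \Cref{def:safe:prophecies}. It then remains to check that $\runs(\aut, s, \strat_\plant' \parallel \strat_\contr') \subseteq \safety(F)$, after which $\strat_\plant' \in \safeProphecyF{\aut}{s,\labelts^\contr(s^\contr)}$ follows immediately from the definition.

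For the safety inclusion, I use the fact that $\univcontr$ is built on the transition structure of $\aut$: by an induction on the length of the reaching prefix, whenever $\univcontr \times (\strat_\plant \parallel \strat_\contr)$ is in state $(s, s^\plant, s^\contr)$, the deterministic run of $\aut$ from $q_0$ on the associated prefix trace is currently at $s$. Any trace of $\strat_\plant' \parallel \strat_\contr'$ is then precisely a suffix of some full trace in $\traces(\strat_\plant \parallel \strat_\contr)$; by $\strat_\plant \parallel \strat_\contr \vDash \varphi$ the full $\aut$-run never leaves $F$, and by determinism its tail starting at $s$ is exactly the $\aut$-run on the suffix, so it also stays in $F$ as required.

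The main obstacle I anticipate is the bookkeeping step linking the three-way product to suffix traces: making precise that a reachable product state corresponds to a unique input history on $\outputs{\env}$, that the shifted composition $\strat_\plant' \parallel \strat_\contr'$ enumerates exactly the suffix traces, and that the deterministic $\aut$-run splits cleanly across the prefix/suffix boundary at state $s$. Once this invariant is in place, the remainder of the argument is a one-line consequence of the fact that every tail of a safety-accepted run is itself contained in $F$.
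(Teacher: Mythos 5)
Your proposal is correct and follows essentially the same route as the paper's proof: the witness is the controller re-initialised at $s^\contr$ (so its first output is $\labelts^\contr(s^\contr)$), and the safety inclusion $\runs(\aut,s,\strat_\plant'\parallel\strat_\contr')\subseteq\safety(F)$ is obtained by prepending the prefix that reaches $(s,s^\plant,s^\contr)$, invoking $\strat_\plant\parallel\strat_\contr\vDash\varphi$ together with $\lang(\aut)=\lang(\varphi)$ on the resulting full run, and using that every tail of a run confined to $F$ is itself confined to $F$. The bookkeeping step you flag (reachable product states corresponding to prefixes whose deterministic $\aut$-run ends at $s$, and shifted compositions generating exactly the suffix traces) is treated at the same level of informality in the paper, so nothing further is needed.
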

\begin{proof}
    Let $\strat_\plant = (S^\plant, s_0^\plant,\tau^\plant,\labelts^\plant)$ be an admissible plant and let the controller strategy $\strat_\contr = (S^\contr, s_0^\contr,\tau^\contr,\labelts^\contr)$ be such that $\strat_\plant\parallel\strat_\contr\vDash\varphi$. We need to show that $\strat_\contr\vDash\univcontr\parallel\strat_\plant$.
    Consider the composition $\univcontr' = \univcontr\times (\strat_\plant\parallel\strat_\contr) = (S', s_0', \tau')$.
    Let $s' = (q,s^\plant,s^\contr)\in\reach(\univcontr')$.
    Then, it is enough to show that $\strat_\plant(s^\plant)\in\kappa(q,\labelts^\contr(s^\contr)) = \safeProphecyF{\aut}{q,\labelts^\contr(s^\contr)}$.
    This means, it is enough to show that for strategy $\strat_\contr' = \strat_\contr(s^\contr)$ and $\strat_\plant' = \strat_\plant(s^\plant)$, it holds $\runs(\aut,q,\strat_\plant' \parallel \strat_\contr')\subseteq\acc$.
    
    Let $\run$ be a run in $\runs(\aut,q,\strat_\plant' \parallel \strat_\contr')$, then by definition, there exists a run $\run'$ in $\univcontr'$ with $\run'[i] = (\run[i],*,*)$ for all $i\geq 0$.
    Let $\run^f = (q_0,s_0^\plant,s_0^\contr)\dotsc(q_k,s_k^\plant,s_k^\contr)$ be a path in $\univcontr'$ from $s_0'$ to $s'$.
    Then, by construction, $(\run^f[0,k-1])\run'$ is a run from $s_0'$ in $\univcontr'$.
    As $\strat_\contr\parallel\strat_\plant\vDash\varphi$ and $\lang(\aut)=\lang(\varphi)$, the corresponding run of $(\run^f[0,k-1])\run'$ in $\aut$ is accepting, i.e., $q_0q_1\dotsc q_{k-1}\run\in\acc$. By the definition of safety acceptance, we have $\run\in\acc$.
    As $\run$ is arbitrary, we have $\runs(\aut,q,\strat_\plant' \parallel \strat_\contr') \subseteq \acc$, and hence, $\strat_\contr\vDash\univcontr\parallel\strat_\plant$.
\end{proof}

The theorem shows that every controller strategy that is correct for a plant is preserved in the prophecy set.
Furthermore, as ensuring that the controller does not make any unsafe choices is sufficient to realize a safety specification, we next show that the safe prophecies are sufficient to synthesize a correct universal controller realizing any safety specification once a plant strategy is given.

\begin{restatable}{theorem}{restateSufficient}\label{thm:safeprophecy-sufficient}
    Let $\varphi$ be a safety LTL specification and $\aut = (Q,q_0,\delta,\acc = \safety(F))$ its equivalent safety automaton.  
    The universal controller $\univcontr = (Q,q_0,\delta,\kappa)$ with $\kappa(q,\alpha) = \safeProphecyF{\aut}{q,\alpha}$ is correct for $\varphi$.
\end{restatable}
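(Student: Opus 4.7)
The plan is to fix an admissible plant $\strat_\plant = (S^\plant, s_0^\plant, \tau^\plant, \labelts^\plant)$ and a controller $\strat_\contr = (S^\contr, s_0^\contr, \tau^\contr, \labelts^\contr)$ with $\strat_\contr \vDash \univcontr \parallel \strat_\plant$, and then show that every trace $\trace \in \traces(\strat_\plant \parallel \strat_\contr)$ satisfies $\varphi$. Since $\lang(\aut) = \lang(\varphi)$ and $\aut$ is a safety automaton with safe set $F$, this reduces to showing that the unique $\aut$-run induced by every such $\trace$ visits only states in $F$. The key structural observation is that $\univcontr$ shares its state space $Q$, initial state $q_0$, and transition function $\delta$ with $\aut$, so the $q$-components of states reachable in the product $\univcontr \times (\strat_\plant \parallel \strat_\contr)$ are precisely the $\aut$-states visited on traces of $\strat_\plant \parallel \strat_\contr$.

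The core of the proof is thus an induction on path length in this product, establishing that every reachable state $(q, s^\plant, s^\contr)$ satisfies $q \in F$. For the base case, consistency at the initial state gives $\strat_\plant \in \safeProphecyF{\aut}{q_0, \labelts^\contr(s_0^\contr)}$, and since any safe prophecy requires at least one $\aut$-run from $q_0$ that lies in $\safety(F)$, that run's first state $q_0$ must belong to $F$. For the inductive step, I fix a reachable $(q, s^\plant, s^\contr)$ with $q \in F$ and let $(q', s'^\plant, s'^\contr)$ be the successor under environment input $\sigma$. Applying consistency yields a witness controller $\strat_\contr'$ with $\stratoutput(\strat_\contr', \epsilon) = \labelts^\contr(s^\contr)$ and $\runs(\aut, q, \strat_\plant(s^\plant) \parallel \strat_\contr') \subseteq \safety(F)$. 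The first $\aut$-transition of any such run is determined by $\sigma$ together with $\labelts^\plant(s^\plant)$ and $\labelts^\contr(s^\contr)$, which are exactly the labels driving the transition to $(q', s'^\plant, s'^\contr)$ in the product; hence the run's second state is $q'$, and safety of the run forces $q' \in F$.

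Once the induction is complete, the $\aut$-run on any trace of $\strat_\plant \parallel \strat_\contr$ stays in $F$ throughout, so the trace lies in $\lang(\aut) = \lang(\varphi)$, giving $\strat_\plant \parallel \strat_\contr \vDash \varphi$ and hence correctness of $\univcontr$.

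The main obstacle is the inductive step: consistency only supplies \emph{some} witness controller $\strat_\contr'$ starting with the current output of $\strat_\contr$, and $\strat_\contr'$ may diverge from $\strat_\contr$ after the first move, so I cannot directly transfer a full safe run of $\strat_\contr'$ to $\strat_\contr$. What rescues the argument is that the prophecy really only constrains the \emph{current} controller output, and this one piece of information, together with the plant's current output and the environment input, fully determines the single $\aut$-transition out of $q$; thus the prophecy's guarantee that the witness's safety run remains in $F$ on its very first step is exactly what the induction needs to advance from $q$ to $q'$, after which the next application of consistency takes over.
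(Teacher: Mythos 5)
Your proof is correct and follows essentially the same route as the paper: both reduce to showing that every reachable state of the product $\univcontr \times (\strat_\plant \parallel \strat_\contr)$ has its $Q$-component in $F$, using consistency to extract at each such state a witness controller whose runs with the locally rooted plant all lie in $\safety(F)$. The one difference is that your induction is avoidable: the paper notes that the safe prophecy at position $i$ already yields an accepting run \emph{starting at} $\run[i]$, so $\run[i]\in F$ follows directly from the definition of safety acceptance, without propagating membership in $F$ from the predecessor via the second state of the witness run (though your handling of the witness controller possibly diverging after its first move is sound).
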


\begin{proof}
    Let $\strat_\plant = (S^\plant, s_0^\plant,\tau^\plant,\labelts^\plant)$ be an admissible plant and let the controller strategy $\strat_\contr = (S^\contr, s_0^\contr,\tau^\contr,\labelts^\contr)$ be consistent with $\univcontr$ w.r.t. $\strat_\plant$, i.e., $\strat_\contr\vDash \univcontr\parallel\strat_\plant$.
    We need to show that $\strat_\plant\parallel\strat_\contr\vDash\varphi$.
    Let $\trace\in\traces(\strat_\plant\parallel\strat_\contr)$ with $\run = \runt(\aut,q_0,\trace)$ being the corresponding unique run.
    As $\lang(\aut) = \lang(\varphi)$, to prove $\trace\in\lang(\varphi)$, we only need to show that $\run\in\acc$, i.e., $\run[i]\in F$ for all $i\geq 0$.
    
    Let $\strat_\plant = (S^\plant, s_0^\plant,\tau^\plant,\labelts^\plant)$ and $\strat_\contr = (S^\contr, s_0^\contr,\tau^\contr,\labelts^\contr)$.
    Then, as $\strat_\contr\vDash \univcontr\parallel\strat_\plant$, by definition, for every $(q,s^\plant,s^\contr)\in \reach(\univcontr\times(\strat_\plant\parallel\strat_\contr))$, it holds that $\strat_\plant(s^\plant)\in\kappa(q,\labelts^\contr(s^\contr)) = \safeProphecyF{\aut}{q,\labelts^\contr(s^\contr)}$.
    
    As $\trace\in\traces(\strat_\plant\parallel\strat_\contr)$, by construction, there exists a run $\run'$ from the initial state in $\univcontr\times(\strat_\plant\parallel\strat_\contr)$ corresponding to $\trace$ such that $\run'[i] = (\run[i],s_i^\plant,s_i^\contr)$.
    Hence, for all $i\geq 0$, we have
    $\run'[i]\in\reach(\univcontr\times(\strat_\plant\parallel\strat_\contr))$, and hence, $\strat_\plant(s_i^\plant)\in\safeProphecyF{\aut}{\run[i],\labelts^\contr(s_i^\contr)}$.
    Then, by definition of safe prophecies, for each $i\geq 0$, there exists $\strat_\contr^i\in\contrall$ with $\stratoutput(\strat_\contr^i,\epsilon) = \labelts^\contr(s_i^\contr)$ and $\runs(\aut,\run[i],\strat_\contr^i\parallel\strat_\plant(s_i^\plant))\subseteq\acc$.
    This means, for each $i\geq 0$, there exists runs from $\run[i]$ in $\aut$ that are accepting, and hence, by definition of safety acceptance $\acc$, we have $\run[i]\in F$.
\end{proof}

The universal controller for our running example shown in \Cref{fig:running:universal} (left) is annotated with safe prophecies and hence, is correct and most permissive for the safety specification $\varphi$ in~\eqref{equ:spec:exp}.

\subsection{Automata for Prophecies}\label{sec:automata:for:prophecies}
Prophecies, as defined in \Cref{def:prophecy} and \Cref{def:safe:prophecies}, define sets of plants that are correct for an explicit controller choice.
A suitable automaton model for prophecies are tree automata whose languages are sets of trees (in contrast to sets of traces for word automata).
Every tree in the language of such an automaton corresponds to a strategy of a plant, which is also an infinite tree.
We show that we can construct a tree automaton that correctly accepts the safe prophecy for a given controller choice.
The basic idea of the construction is similar to the automaton to two-player game conversion for LTL synthesis~\cite{DBLP:series/natosec/Finkbeiner16}.
A plant satisfies the prophecy if it has a correct reaction to any choice of the controller inputs.
Additionally, we restrict the output of the controller in the first step of the automaton, i.e., the output we compute the prophecy for. 
This yields the following result.

\begin{restatable}{theorem}{restateTree}\label{thm:safe:prophecy:tree}
    Let $\aut = (Q, q_0, \delta, \Omega)$ be a safety automaton for the safety LTL formula $\varphi$ and $q_s \in Q, \alpha_s \in 2^{O_\contr}$ be an output of the controller.
    There exists a linear tree automaton $\aut'$ s.t. $\mathcal{L}(\aut') = \safeProphecy_{\aut}^{q_s,\alpha_s}$. 
\end{restatable}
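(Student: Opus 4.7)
The plan is to construct $\aut'$ directly from $\aut$, $q_s$, and $\alpha_s$ by wrapping $\aut$ with a ``guess-and-mask'' mechanism that existentially commits to a controller output at each node of the plant tree and routes all subtrees corresponding to the other controller outputs to a harmless sink. Concretely, I would define $\aut' = (Q', q_0', T', \acc')$ on $2^{\outputs{\plant}}$-labeled $2^{\inputs{\plant}}$-trees with $\inputs{\plant} = \outputs{\env}\cup\outputs{\contr}$, taking $Q' = Q\cup \{q_0',*\}$ where $q_0'$ is a fresh initial state and $*$ is a sink. From $q_0'$ with tree label $\sigma$ the only enabled transitions use $\alpha = \alpha_s$: for each $\beta\in 2^{\outputs{\env}}$, direction $(\alpha_s,\beta)$ is sent to $\delta(q_s, \alpha_s\cup\beta\cup\sigma)$ and every direction $(\alpha',\beta)$ with $\alpha'\neq\alpha_s$ is sent to $*$, guarded by $q_s\in F$ and every chosen successor lying in $F$. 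From a state $q\in Q$ with label $\sigma$, transitions existentially pick some $\alpha$ and proceed analogously (requiring $\delta(q,\alpha\cup\beta\cup\sigma)\in F$ for every $\beta$). From $*$ every direction goes to $*$. The acceptance $\acc'$ is the safety condition with safe set $F\cup\{q_0',*\}$, and $|Q'| = |Q|+2$ is linear in $|Q|$.

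For the inclusion $\safeProphecyF{\aut}{q_s,\alpha_s}\subseteq \lang(\aut')$, I would turn a witnessing controller $\strat_\contr$ into an accepting run tree on a plant tree $\strat_\plant$: at each node $p$, compute the corresponding controller input history from the environment coordinates on $p$ together with the plant-output labels along $p$, query $\strat_\contr$ to obtain its output $\alpha$, and use $\alpha$ in the enabling transition (with off-$\alpha$ directions sent to $*$). The hypothesis $\runs(\aut,q_s,\strat_\plant\parallel\strat_\contr)\subseteq\acc$ is exactly what is needed for safety on every non-$*$ path, and $\stratoutput(\strat_\contr,\epsilon)=\alpha_s$ matches the root restriction. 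For the converse inclusion, a run tree $\runtree$ of $\aut'$ on $\strat_\plant$ records at each non-$*$ node the $\alpha$ picked by its enabling transition; by construction the non-$*$ subtree of $\runtree$ branches only on the environment coordinate, so these choices assemble into a Moore strategy $\strat_\contr$ whose composition with $\strat_\plant$ produces precisely the runs appearing along the non-$*$ paths, all of which stay in $F$, yielding $\strat_\plant\in \safeProphecyF{\aut}{q_s,\alpha_s}$.

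The main obstacle I anticipate is the mismatch between two notions of ``history'' that the construction must reconcile: the plant tree branches on $\outputs{\env}\cup\outputs{\contr}$, whereas the controller's own history is indexed by $\outputs{\env}\cup\outputs{\plant}$. The $*$-masking is precisely the device that bridges the two, since it ensures that only paths consistent with the chosen controller output at every ancestor remain live; the non-$*$ subtree thus branches exclusively in the environment coordinate and its tree labels provide precisely the plant outputs the controller would observe, allowing a Moore strategy to be read off unambiguously. Formalizing this correspondence carefully, and enforcing $q_s\in F$ as part of the root transition (otherwise the prophecy is vacuously empty and $\aut'$ must recognize no tree), is where most of the bookkeeping sits; the rest of the argument is then a routine verification that safety along non-$*$ paths of $\runtree$ coincides with membership of the induced $\aut$-runs in $\acc$.
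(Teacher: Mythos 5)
Your construction is essentially the paper's: the same fresh initial state forcing $\alpha_s$, the same sink state absorbing every direction whose controller component disagrees with the existentially chosen output, and the same safety condition over $F\cup\{q_0',q_{\mathit{sink}}\}$ giving $|Q|+2$ states; the paper then delegates the two language inclusions to the standard LTL-to-tree-automaton synthesis construction, whereas you sketch the back-and-forth translation between run trees and Moore controllers explicitly. Your additional guard enforcing $q_s\in F$ at the root is a sensible refinement that handles a corner case (an unsafe $q_s$ all of whose successors are safe) which the paper's version, relying only on the acceptance condition applied from $q_0'$ onward, would get wrong, though it is immaterial when unsafe states are absorbing.
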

\begin{proof}
    We construct a tree automaton $\aut'$ over $(2^{O_\plant})$-labelled $2^{I_\plant}$ trees that accepts all plants for which a controller strategy exists with the current choice $\alpha_s$ that ensures the specification. 
    The idea is to simulate the plant and controller behavior on the word automaton $\aut$, where
    the automaton is implicitly labelled with both plant and controller outputs, and only branches over environment outputs.
    To make the tree automaton labelled with plant outputs, 
    we remove each controller output $\beta_\contr$ from the labels and make the corresponding node branch over all possible plant inputs, i.e., outputs of both controller and environment, 
    by taking all branches with inconsistent controller outputs (i.e., $\neq \beta_\contr$) to an accepting sink state.
    In addition, we simulate $\aut$ from $q_s$ and restrict the initial choice of the controller to $\alpha_s$ by introducing a new initial state $q_0'$.
    Formally, the automaton $\aut' =(Q', q_0', T', \Omega')$ is defined as follows:
    \begin{itemize}
        \item $Q' = Q \uplus \{q_0',q_{sink}\}$,
        \item $T'\subseteq Q\times 2^{O_\plant} \times (2^{I_\plant} \rightarrow Q)$ is the minimal set such that
        \begin{itemize}
            \item for all $\beta_\plant \in 2^{O_\plant}$, there is a transition $(q_0', \beta_\plant, f)$ such that for all $\beta \in 2^{I_\plant}$,
            $f(\beta) = \begin{cases}
                \delta(q_s, \beta_\plant\cup\beta) & \text{if } \beta\cap\outputs{\contr} = \alpha_s,\\
                q_{sink} & \text{otherwise},
            \end{cases}$
            \item for all $q \in Q$, $\beta_\plant \in 2^{O_\plant}$, and $\beta_\contr \in 2^{O_\contr}$, there is a transition $(q, \beta_\plant,f)$ such that for all $\beta \in 2^{I_\plant}$,
            $f(\beta) = \begin{cases}
                \delta(q, \beta_\plant\cup\beta) & \text{if } \beta\cap\outputs{\contr} = \beta_\contr,\\
                q_{sink} & \text{otherwise},
            \end{cases}$
            \item for all $\beta_\plant \in 2^{O_\plant}$, there is a transition $(q_{sink}, \beta_\plant, f)$ such that $f(\beta) = q_{sink}$ for all $\beta \in 2^{I_\plant}$.
        \end{itemize}
        \item $\Omega' = \safety(F')$ with $F' = F \cup \{q_0',q_{sink}\}$.
    \end{itemize}
    Compared to $\aut$, the automaton $\aut'$ has two new states $q_0'$ and $q_{sink}$.
    For all transitions from $q_0'$, we fix the controller output to be $\alpha_s$ and quantify over all possible plant outputs. Afterwards, we branch over all possible plant inputs but only simulate the automaton $\aut$ when the controller output is $\alpha_s$, otherwise, we go to the sink state $q_{sink}$.
    Similarly, in the general case, we quantify over all possible controller outputs $\beta_\contr$ and plant outputs $\beta_\plant$, but only use the plant output to explicitly label the tree.
    Then, we branch over all possible plant inputs but only simulate the automaton $\aut$ when the controller output is $\beta_\contr$, otherwise, we go to the sink state $q_{sink}$.
    The runs of this automaton are all trees that branch over environment and controller outputs and are labeled with plant outputs that satisfy the prophecy $\safeProphecyF{\aut}{q_s,\alpha_s}$.
    Since the transition function follows a similar construction for reactive synthesis (without a plant) in~\cite{DBLP:series/natosec/Finkbeiner16}, the correctness of the construction follows immediately.
    As $\abs{Q'} = \abs{Q}+2$, the size of $\aut'$ is linear in the size of $\aut$.
\end{proof}

  \begin{wrapfigure}[14]{r}{0.6\textwidth}
    \vspace*{-0.8cm}
    \tikzstyle{state}=[draw, circle, fill=none, minimum width=0.5cm, 
minimum height = 0.5cm,
align=center, thick]
\tikzstyle{rec}=[draw, rectangle, fill=none, text width=0.25cm, 
align=center, thick]

\begin{tikzpicture}[->,>=stealth',shorten >= 1pt,auto]
  \hspace{-0.6cm}

\node[state] (q00) at (0,0){%
    $~q_0'$
};
\node[state] (q0) at (-0.3,1.5){%
    $~q_0$
};
\node[rec] (f1) at (1.5,1){%
    $f_1$
};
\node[rec] (f2) at (1.5,-0.5){%
    $f_2$
};
\node[state] (q1) at (3.5,1){%
    $~q_1$
};
\node[state] (q2) at (3.5,-0.5){%
    $~q_2$
};
\node[rec] (f3) at (6,0.8){%
    $f_3$
};
\node[rec] (f4) at (6,2){%
    $f_4$
};
\node[rec] (f5) at (6,0.2){%
    $f_5$
};
\node[rec] (f6) at (6,-1){%
    $f_6$
};

\node[state, draw=none] (init)[left = 0.4 of q00]{%
};

\path (init) edge[thick] (q00)
(q00) edge[thick,bend left=10] node[below] {%
  $o_\plant$
  } (f1)
(q00) edge[thick] node[below] {%
  $\neg o_\plant$
  } (f2)
(f1) edge[dashed] node[above] {%
  $~o_\contr\!\wedge\! o_\env$
  } (q0)
(f1) edge[dashed] node[above] {%
  $~o_\contr\!\wedge\! \neg o_\env$
  } (q1)
(f2) edge[dashed, bend left=-10] node[below] {%
  $~o_\contr\!\wedge\! o_\env$
  } (q2)
(f2) edge[dashed,bend left=20] node[below,xshift=0.4cm] {%
  $~o_\contr\!\wedge\!\neg o_\env$
  } (q1)
(q1) edge[thick,bend right=10] node[below,yshift=0.1cm] {%
  $\true$
  } (f3)
(q1) edge[thick,bend left=10] node[above] {%
  $\true$
  } (f4)
(f3) edge[dashed,bend right=05] node[above, xshift=0.2cm,yshift=-0.1cm] {%
  $o_\contr$
  } (q1)
(f4) edge[dashed,bend left=05] node[right,xshift=0.1cm,yshift=-0.05cm] {%
  $\neg o_\contr$
  } (q1)
(q2) edge[thick,bend left=10] node[above] {%
  $\true$
  } (f5)
(q2) edge[thick,bend right=10] node[below] {%
  $\true$
  } (f6)
(f5) edge[dashed, bend left=05] node[right,xshift=0.1cm,yshift=-0.1cm] {%
  $o_\contr$
  } (q2)
(f6) edge[dashed, bend right=05] node[above,yshift=-0.1cm,xshift=0.2cm] {%
  $\neg o_\contr$
  } (q2)
;

\end{tikzpicture}
  \vspace*{-0.9cm}
    \caption{A part of the tree automaton (without transitions of $q_0$ and $q_{sink}$) with safe states $F' = \{q_0',q_0,q_1\}$ for prophecy $\safeProphecyF{\aut_\varphi}{q_0,o_\contr} = \bracket{o_\plant}$ of \Cref{fig:running:universal}, where each function $f_i$ shown as dashed edges.
    }
    \label{fig:running:tree-aut}
  \end{wrapfigure}
  
Note that it is not possible to represent safe prophecies with a word automaton as we are looking for an automaton that accepts all plant strategies of any size, therefore infinite trees.
One key insight is that we do not only have to verify the choices of the controller against the behavior of the plant, but, there must also exist a strategy of the controller such that all future behavior of the plant satisfies the specification.
Both of these criteria are captured in the constructed tree automaton.

\begin{example}
    Consider the universal controller in \Cref{fig:running:universal} (left) for the running example.
    A tree automaton $\aut'$ for prophecy $\kappa(q_0,o_\contr) = \safeProphecyF{\aut_\varphi}{q_0,o_\contr}$ is shown in \Cref{fig:running:tree-aut}.
    As we are considering a prophecy from the initial state $q_0$, we have a new initial state $q_0'$ that is a copy of $q_0$.
    From $q_0'$, we fix the controller output to be $o_\contr$ for simulation, and branch over all possible plant outputs. 
    Given plant output $o_\plant$ from $q_0'$, we simulate the automaton $\aut_\varphi$ by function $f_1$ when the controller output is $o_\contr$ (else we go to the sink state $q_{sink}$).
    Hence, it branches to state $q_0$ and $q_1$ depending on the environment outputs.
    Similarly, given plant output $\neg o_\plant$ from $q_0'$, we simulate the automaton $\aut_\varphi$ by function $f_2$ when the controller output is $o_\contr$.
    From every other state, we branch over all possible plant and controller outputs. 
    For example, from state $q_1$, we fix the controller output to be $o_\contr$ (resp. $\neg o_\contr$) and simulate the automaton $\aut_\varphi$ by function $f_3$ (resp. $f_4$) with consistent controller output.
    As every transition in $\aut_\varphi$ from $q_1$ leads to $q_1$, the function $f_3$ maps every input with consistent controller output to $q_1$ (and inconsistent controller output to $q_{sink}$).
\end{example}

Once we have the tree automaton for the prophecy, we can use it to construct an explicit controller consistent with a universal controller by verifying the plant strategies against the tree automaton.
For verifying a plant against such a tree automaton, we check if the infinite tree produced by the plant (strategy) is an element of the language of the tree automaton.
We refer to the procedure that constructs the automaton for the prophecy as \lstinline[style=default,language=custom-lang]|prophecy$(q, \alpha, \aut)$|.

\section{Universal Controller Synthesis}\label{sec:synthesis}
In this section, we present an automata-theoretic approach to synthesizing correct universal controllers from safety LTL formulas.
We also present an algorithm that, given a plant and a universal controller, constructs a controller that is consistent with the plant and satisfies the specification.

\subsection{Universal Controllers}
\begin{wrapfigure}[8]{r}{0.57\textwidth} 
    \vspace{-2.2cm}
    \begin{minipage}{0.57\textwidth}
    \begin{algorithm}[H]
            \caption{Universal Controller Synthesis}
            \label{alg:general:controller}
        \begin{mycode}
let universalController($\varphi$) := 
@@ let $\aut_\varphi = (Q,q_0,\delta,\acc) \gets \ltlToAut(\varphi)$   
@@ let $\kappa \gets []$ 
@@ for $q \in Q$ and $\alpha \in 2^{\outputs{\contr}}$
@@@@ let $\kappa(q,\alpha) \gets$ prophecy($q,\alpha, \aut_\varphi$)
@@ return $\univcontr = (Q,q_0,\delta,\kappa)$\end{mycode}%
    \end{algorithm}
\end{minipage}
\end{wrapfigure}
\Cref{alg:general:controller} shows the main procedure for the construction of a universal controller.
The first step is to construct an equivalent safety automaton $\aut_\varphi$ for the specification $\varphi$.
The state space of the automaton yields the initial state space for the universal controller.
The algorithm continues with iterating over every pair of state and controller output, and builds the respective prophecy with the construction introduced in~\Cref{sec:automata:for:prophecies}.
Once the map $\kappa$ contains all pairs of states and outputs, we build the universal controller $\univcontr$.
The universal controller can be interpreted as follows: For a given environment output and state $q$, some output $\alpha$ is taken for which the plant behaves as specified in the tree automaton $\kappa(q, \alpha)$.
Since the alphabet of the automaton $\aut_\varphi$ ranges over all propositions, the transition function $\delta$ is also defined over the plant outputs. 
During composition, we ensure that the prophecy also agrees with the current output of the plant to ensure consistency.

\bigskip
\noindent\textbf{Subautomata Sharing.}
Instead of constructing a new tree automaton for every state-output pair in $\aut_\varphi$, we build a global tree automaton that contains all possible plants for the specification and only set new initial states and transitions.
The global tree automaton construction is similar to \lstinline[style=default,language=custom-lang]|prophecy$(q, \alpha, \aut)$| but $q$ is the initial state $q_0$ of $\aut_\varphi$ and $\alpha$ is not considered in the construction.
One obtains the tree automaton $\acc_\tree = (Q, q_0, T, \Sigma)$ that accepts all plant and controller pairs that are accepting for the specification.
In the loop of~\Cref{alg:general:controller}, the only construction step is changing the initial state in $\aut_\tree$ to a duplicate of $q$ and adding an initial step from $q$ with $\alpha$ to the transition relation $T$.
For the implementation of this algorithm, as presented in \Cref{sec:experiments}, it suffices to store the new initial state and the controller outputs. 
The modified copy of the global tree automaton is only computed during the verification in \Cref{alg:composition}.

\subsection{Composition}
\begin{wrapfigure}[13]{r}{0.7\textwidth} 
    \vspace{-2.4cm}
    \begin{minipage}{0.7\textwidth}
\begin{algorithm}[H]
    \caption{On-the-fly Composition}
    \label{alg:composition}
\begin{mycode}  
let composition($\univcontr = (Q,q_0,\delta,\kappa)$, $\strat_\plant =  (Q',q_0',\delta',\labelts)$) := 
@@ let s $\gets (q_0, q_0')$, states $\gets$ Set(), edges $\gets$ Set()
@@ let queue $\gets$ Queue(s)
@@ while $(q_c, q_p)$ $\gets$ queue.pop() do
@@@@ let edgelist $\gets$ stepComposition($q_c, q_p, \univcontr, \strat_\plant$) 
@@@@ for ($\sigma, (q_c', q_p')$) in edgelist do
@@@@@@@@ $\aut_\tree \gets \kappa(q_c, \sigma)$
@@@@@@@@ $\strat_\plant' \gets$ modifiedPlant($\strat_\plant, q_p, \sigma$)
@@@@@@@@ if isMember($\strat_\plant', \aut_\tree$) then
@@@@@@@@@@ edges.add($(q_c, q_p), \sigma, (q_c', q_p'$)) 
@@@@@@@@@@ queue.push($q_c', q_p'$) if not states.contains($q_c', q_p'$)
@@@@ states.add($q_c', q_p'$) 
@@ return (states, $(q_0, q_0')$, edges)
\end{mycode}
\end{algorithm}
\end{minipage}
\end{wrapfigure}

Once the universal controller is computed, we can compose it with a plant to obtain a plant-specific controller that is consistent with the plant's behavior.
The algorithm operates on an on-the-fly exploration of the product state-space of the universal controller and the plant and is shown in \Cref{alg:composition}.
The new initial state of the controller is a tuple $(q_0, q_0')$, i.e., the initial states of the universal controller and the plant.
Note that we omit double parentheses for function calls over tuple of universal controller and plant states whenever it is clear from the context.
We use a \lstinline[style=default,language=custom-lang]|while| loop over the states to be explored in the composition as a termination condition. 
The body of the loop that ranges from line 5 to line 12 starts by computing a one-step composition from the current state $(q_c, q_p)$ in \lstinline[style=default,language=custom-lang]|stepComposition($q_c, q_p, \gencontr, \strat_\plant$)| which takes the states, the universal controller, and the plant as input.
The result of this function is a list of edges, i.e., pairs over an assignment of the alphabet and future state, in the composition that are possible to reach.
However, since not every computed edge is correct for the given plant, the prophecy of this explicit edge has to be checked first before adding it to the controller's edges.
To verify the plant, we first obtain the tree automaton $\aut_\tree$ representing the pro\-phe\-cy with $\kappa(q_c, \sigma)$ and change the input plant to a modified version $\strat_\plant'$ that starts in $q_p$ with $\sigma$.
The membership check is then performed in line 9 with a call to \lstinline[style=default,language=custom-lang]|isMember($\strat_\plant', \aut_\tree$)| returning true if the tree produced by $\strat_\plant$ is an accepting run of $\aut_\tree$.
In case of the membership check being successful, the edge is a correct edge in the controller and is added to the set of edges and, if not already explored, the target of the edge is added to the queue.
The result of the algorithm is then all explored states and edges with the starting state $(q_0, q_0')$.

\smallskip
\noindent\textbf{Tree Automata Checking.}
The most involved computation is verifying the modified plant against the prophecy tree automaton.
In \Cref{alg:composition}, this check is performed in the function \lstinline[style=default,language=custom-lang]|isMember($\strat_\plant', \aut_\tree$)|.
Following the literature~\cite{DBLP:conf/dagstuhl/Kirsten01}, we encode this function into a 2-player game where the environment player chooses the inputs of the plant and the system player has to provide a correct output for every choice of the environment.
If the system player wins, then the infinite tree produced by the plant's Moore machine is contained in the language of the tree automaton.
The winning condition of the game is \emph{safety}, implied by the acceptance condition of the tree automaton, and the state space is linear in the size of the tree automaton and the plant.

\smallskip
\noindent\textbf{Hashing.}
The number of automaton checks of~\Cref{alg:composition} is $\abs{\Sigma} \times |\gencontr| \times |\strat_\plant|$, i.e., 
the product of the alphabet size, the universal controller size, and the plant size.
We reduce the number of automata checks drastically by hashing intermediate results: For a solved game we obtain a set of \emph{winning} and a set of \emph{losing} states in the game.
These states can be directly mapped back to the composition such that we maintain a function \lstinline[style=default,language=custom-lang]|h($\sigma, (q_c, q_p)$)| that stores for already computed edge-state pairs if the game is winning or losing.
Then, before calling \lstinline[style=default,language=custom-lang]|isMember|, we check if the current edge was hashed before and, if so, return the hashed result.
In our experiments in \Cref{sec:experiments}, we observe that the number of \emph{real} containment checks that are computed without hashing is \emph{one} for almost all benchmarks.
This is the case if every state is reachable for every initial transition from ($q_0, q_0')$ since this is the first state that we compute the safety game for.

\smallskip
\noindent\textbf{Early Termination.}
The local exploration of the state space allows the use of heuristics for termination during both, composition of universal controller with plant and construction of the game.
One such heuristics is defined over the transition function of currently discovered edges: Once the explored set of edges has a transition for every possible output of the environment, then we can stop exploring the current edges and move on to the next state in the queue. 
We do this check before jumping from line 11 to line 6.
Furthermore, we order the edgelist according to already explored states so that we look for such edges before verifying edges to new states.
The semantics of the specification is used to reduce the size of the game to be solved during verification.
Since the acceptance condition is solely based on the specification automaton, we can stop exploring the state space of the plant once we are in a clearly accepting/rejecting loop, e.g., a terminal state.
The early termination heuristics all benefit from one key observation: the specifics of the plant implementation can often be ignored while solving the controller synthesis problem.

Note that for simplicity, we construct the controller as a Mealy machine, i.e., as usual for automata, we use edge labels instead of state labels.
However, this can easily be transformed into Moore machines by ignoring the first environment output and moving the controller output to the source state of every edge.

\subsection{Correctness and Complexity}
\Cref{alg:general:controller} and \Cref{alg:composition} together compute a controller for a given specification and plant with the automata constructions presented in \Cref{sec:automata:for:prophecies}.
The correctness of the algorithm immediately follows from (\cref{thm:safeprophecy-sufficient}).
As the tree automata for the prophecies are linear in the size of the deterministic specification automaton, solving the game in polynomial time results in the same complexity as the standard controller synthesis problem~\cite{DBLP:series/natosec/Finkbeiner16}. Moreover, as the number of prophecies is linear in the size of the specification automaton, the overall time complexity of universal controller synthesis remains doubly exponential in the size of the specification.
In summary, we observe the following result:

\begin{corollary}\label{corollary:correctness-algo}
    Let $\varphi$ be a safety LTL formula.
    \Cref{alg:general:controller} returns a universal controller $\gencontr$ that is correct and most permissive for $\varphi$.
    Furthermore, given an admissible plant $\strat_\plant$, \Cref{alg:composition} returns a controller $\strat_\contr$ such that $\strat_\contr \parallel \strat_\plant \vDash\varphi$.
    The overall time complexity of computing $\strat_\contr$ using \Cref{alg:general:controller} and \Cref{alg:composition} is doubly-exponential in the size of the specification $\varphi$ and linear in the size of the plant~$\strat_\plant$.
\end{corollary}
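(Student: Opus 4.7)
The plan is to decompose the statement into three obligations: (a) correctness and permissiveness of the universal controller returned by \Cref{alg:general:controller}; (b) correctness of the controller returned by \Cref{alg:composition} on an admissible plant; and (c) the complexity bound. The first two follow essentially by chaining results already proved in \Cref{sec:propehcies} with the structural invariants of the two algorithms, while the complexity bound is a straightforward accounting argument using the standard LTL-to-DSA blow-up.

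For (a), I would argue as follows. \Cref{alg:general:controller} first invokes $\ltlToAut$ to produce the deterministic safety automaton $\aut_\varphi = (Q, q_0, \delta, \acc)$. It then iterates over every pair $(q, \alpha) \in Q \times 2^{\outputs{\contr}}$ and sets $\kappa(q,\alpha) \gets \textsf{prophecy}(q,\alpha,\aut_\varphi)$. By \Cref{thm:safe:prophecy:tree}, the tree automaton returned by $\textsf{prophecy}$ recognizes exactly the language $\safeProphecyF{\aut_\varphi}{q,\alpha}$, so we may identify $\kappa(q,\alpha)$ with the set $\safeProphecyF{\aut_\varphi}{q,\alpha}$ of plants. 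Hence the universal controller produced is syntactically the one to which \Cref{thm:safeprophecy-necessary} and \Cref{thm:safeprophecy-sufficient} apply, giving most permissiveness and correctness respectively.

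For (b), I would unfold the invariant maintained by \Cref{alg:composition}: after each iteration, every edge in the set \texttt{edges} is an edge $((q_c,q_p),\sigma,(q_c',q_p'))$ such that the plant $\strat_\plant$ starting from $q_p$ and committing to output $\sigma \cap \outputs{\plant}$ belongs to $\kappa(q_c,\sigma)$. This is exactly the consistency requirement of \Cref{def:consistency} at the pair $(q_c,q_p)$. The \texttt{isMember} guard on line 9 is precisely what enforces this invariant, using the fact that a plant is in the language of the prophecy tree automaton iff it is in the corresponding set. Since \texttt{stepComposition} returns every $\sigma$-successor agreeing with the plant's current output, and since admissibility of $\strat_\plant$ together with \Cref{thm:safeprophecy-necessary} guarantees that for every reachable $(q_c,q_p)$ at least one $\sigma$ passes the membership check, the exploration reaches no dead end. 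Totality of the resulting Mealy machine and consistency with $\univcontr$ then yields $\strat_\contr \vDash \univcontr \parallel \strat_\plant$, whence $\strat_\contr \parallel \strat_\plant \vDash \varphi$ by \Cref{thm:safeprophecy-sufficient}.

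For (c), the complexity analysis breaks into three pieces. First, $\ltlToAut$ incurs a double exponential blow-up in $\abs{\varphi}$, yielding $\abs{Q}$ at most doubly exponential. Second, the prophecy automaton constructed in \Cref{thm:safe:prophecy:tree} has size linear in $\abs{\aut_\varphi}$, and the loop in \Cref{alg:general:controller} constructs at most $\abs{Q} \cdot 2^{\abs{\outputs{\contr}}}$ such automata, preserving the doubly-exponential bound. Third, \Cref{alg:composition} explores at most $\abs{Q} \cdot \abs{\strat_\plant}$ product states, and each \texttt{isMember} call reduces (following the standard reduction cited in \Cref{sec:synthesis}) to solving a safety game of size linear in $\abs{\aut_\tree} \cdot \abs{\strat_\plant}$, which takes polynomial time. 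Summing over all explored transitions gives a total cost polynomial in the size of $\aut_\varphi$ and $\strat_\plant$, i.e., doubly exponential in $\abs{\varphi}$ and polynomial (in fact linear, exploiting the hashing so that each automaton is constructed only once per $(q_c, \sigma)$) in $\abs{\strat_\plant}$.

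The main subtle point is (b): one must be careful that the exploration-based construction really yields a \emph{total} Moore/Mealy machine with no stuck states, and that the on-the-fly exclusion of edges whose prophecy fails does not accidentally prune the last remaining safe move at some reachable product state. This is where admissibility of $\strat_\plant$ is used, combined with \Cref{thm:safeprophecy-necessary}, which guarantees that \emph{some} consistent controller exists for every admissible plant; an induction on the exploration order then shows that \Cref{alg:composition} never exhausts its edge options at a reachable state.
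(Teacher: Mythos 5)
Your proposal is correct and follows essentially the same route as the paper, which justifies the corollary by chaining \Cref{thm:safeprophecy-necessary}, \Cref{thm:safeprophecy-sufficient}, and \Cref{thm:safe:prophecy:tree} with the doubly-exponential $\ltlToAut$ blow-up, the linear size of each prophecy automaton, and polynomial-time safety-game solving for the membership checks. You in fact supply more detail than the paper on point (b) --- in particular the observation that the exploration never gets stuck at a reachable product state, which the paper leaves implicit; your inductive argument there is the right one (the base case uses admissibility, and the inductive step uses that membership in a safe prophecy at the predecessor guarantees a safe output at the successor).
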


\begin{figure}[t]
    \centering
    \begin{subfigure}[t]{.48\textwidth}
        \includegraphics[width=0.9\textwidth]{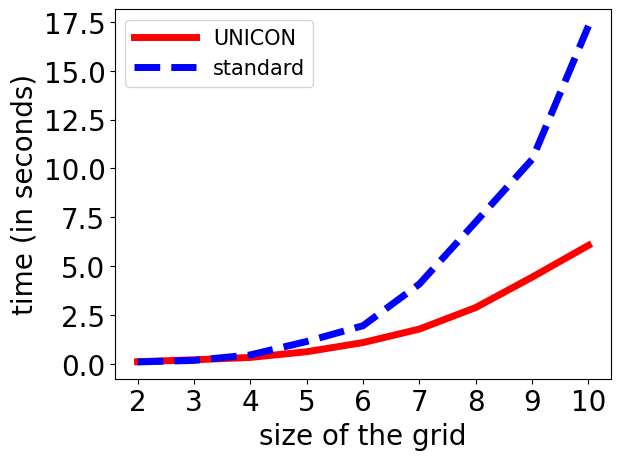}
        \caption{Specification Type 1}
        \label{fig:spec1}
    \end{subfigure}%
    \hfill
    \begin{subfigure}[t]{.48\textwidth}
        \includegraphics[width=0.9\textwidth]{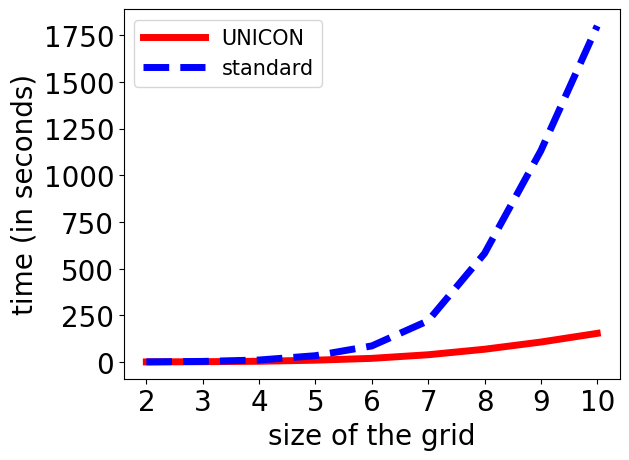}
        \caption{Specification Type 2}
        \label{fig:spec2}
    \end{subfigure}%

    \begin{subfigure}[b]{.48\textwidth}
        \includegraphics[width=0.9\textwidth]{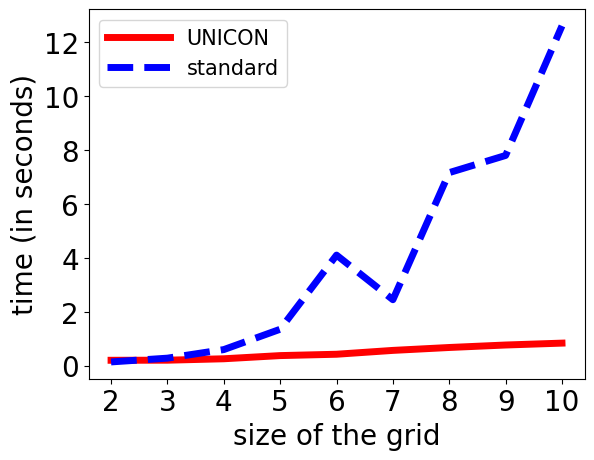}
        \caption{Specification Type 3}
        \label{fig:spec3}
    \end{subfigure}
    \hfill
    \begin{subfigure}[b]{.48\textwidth}
        \includegraphics[width=0.9\textwidth]{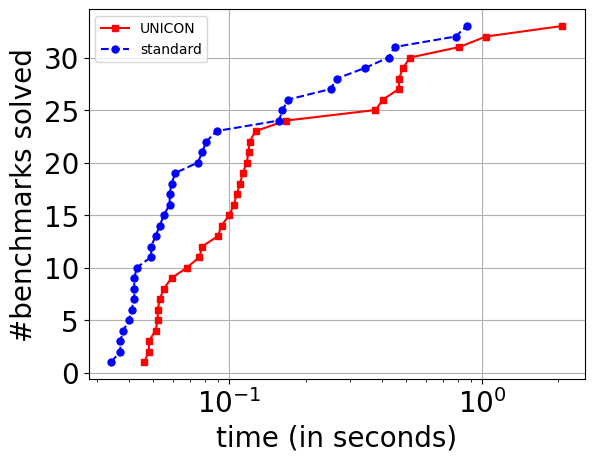}
        \caption{SYNTCOMP benchmarks}
        \label{fig:syntcomp}
    \end{subfigure}
    \caption{We compare the execution times of \prototype{} and the standard synthesis approach. \Cref{fig:spec1} to \Cref{fig:spec3} show the results of three different specification types on increasing robot benchmarks. \Cref{fig:syntcomp} shows a cactus plot for the number of SYNTCOMP benchmarks solved in a certain time budget.}
    \label{fig:maze}
    \vspace*{-0.5cm}
\end{figure}
\section{Experiments}\label{sec:experiments}

We have implemented our approach in a prototype tool called \prototype{}~\cite{tool}. For benchmarking purposes, \prototype additionally implements the standard controller synthesis algorithm that reduces the synthesis problem to game solving~\cite{DBLP:series/natosec/Finkbeiner16}. The tool is implemented in F\# and uses \textsc{FsOmegaLib}~\cite{FsOmegaLib} and \textsc{spot}~\cite{DBLP:conf/cav/Duret-LutzRCRAS22} for automata operations and \textsc{oink}~\cite{DBLP:conf/tacas/Dijk18} for game solving. The experiments are performed on a computer with an Apple M1 Pro 8-core CPU and 16GB of RAM.

We evaluate our approach on a set of benchmarks where the controller guides a robot through a maze-like workspace, avoiding collisions with the walls and with a second (uncontrollable) robot.
Given the size of the grid as a parameter, the workspace is encoded in the plant model as a grid of cells with some walls between the cells and walls around the boundary of the grid.
Initially, the controller robot is placed in the center of the grid and the plant robot in the bottom left corner. At each step, the robots can either stay or move in four directions: up, down, left, and right.
The controller robot has sensors to detect the obstacles (i.e., walls or the plant robot) if they are adjacent to it or when there is a collision. 
The behavioral objective of the controlled robot is specified as an LTL formula. We consider three different types of specifications: Specification type 1 is an invariant $\globally \neg col$, which simply states that the controller must always avoid collisions. In specification type 2, we add some more information about the setting into the temporal formula: $ \neg col \weakuntil \neg \texttt{assump}_\plant$, where the assumption $\texttt{assump}_\plant$ expresses that 
a collision can occur only if in the previous step there was an obstacle in some direction and the controller moved in that direction or stayed in the same position (which might lead to a collision if the obstacle was the other robot).
Specification type 3 is a condition that only restricts the first three steps in the execution, requiring that during those three steps, the robot does not see an obstacle in the \emph{up} direction: $\LTLnext \neg u_{obs} \wedge \LTLnext\LTLnext \neg u_{obs} \wedge \LTLnext\LTLnext\LTLnext \neg u_{obs}$.

\smallskip
\noindent\textbf{Evaluation on robot benchmarks.}
 The results of the experiments are shown in~\Cref{fig:maze}, which shows the comparison of the execution times of \prototype{} (i.e., the total time for synthesizing the universal controller and extracting a correct controller using \cref{alg:general:controller,alg:composition}) with the standard controller synthesis approach to compute a correct controller.
 In all three types of specifications, our approach scales well with growing grid size, while the running time of the standard approach quickly becomes prohibitively large. 
 The reason is that standard controller synthesis constructs the complete composition of the plant and the specification automaton to solve the synthesis problem, while the universal controller synthesis composes the plant and the prophecies on-the-fly. Since the prophecy size is independent of the plant size, the universal controller synthesis scales better with the increasing grid size.
 Specification type 2 performs better than specification type 1, as the added information of the assumption allows for simpler prophecies. Simple specifications like type 3 are most beneficial as the prophecies only need to consider three steps into the future.

 \smallskip
 \noindent\textbf{Baseline evaluation on SYNTCOMP benchmarks.} The benchmarks of the annual SYNTCOMP reactive synthesis competition~\cite{DBLP:journals/corr/abs-2206-00251} are often used to evaluate synthesis algorithms. Since the benchmarks do not contain plant models, we cannot use them to evaluate the scalability of our approach, but we still include them here as a baseline to evaluate the overhead caused by the computation of the universal controller. We selected the 68 benchmarks that have safety requirements and explicit environment assumptions, and translated the assumption formulas into plant models. As shown in~\Cref{fig:syntcomp}, \prototype{} shows similar performance to standard synthesis. Because of the small size of the plant models resulting from the assumption formulas, the overhead for the universal controller is not compensated by the improved performance in the size of the plant model. Still, computation times remain close, showing that \prototype{} is able to scale to practical problem instances. 

\section{Conclusions}

Universal controllers have three major advantages over standard control strategies:
The first advantage is the improved scalability due to the fact that the universal controller is
constructed \emph{independently of the plant}. There is no free lunch -- adapting the universal
controller to a specific plant does require an analysis of the plant
model. However, we only need to verify the prophecy, not the original specification. Simple prophecies can be checked without a full
exploration of the state space. There is no loss of
completeness: the adaptation of the universal controller to a
particular plant succeeds whenever the synthesis problem is
realizable for the plant.

A second advantage is that the universal controller is
\emph{adaptable} to new plants; if the plant model changes, it is not
necessary to re-synthesize the universal controller, only the prophecies need to be re-evaluated.  In principle,
the controller can even be adapted to dynamic changes in the plant
by evaluating the prophecies on-the-fly.

A third major advantage of the universal controller is its
\emph{explainability}: the prophecy identifies the specific property of the
plant that makes it safe to choose the move.  If the adaptation fails,
the violated prophecies identify properties of the plant that 
make it unrealizable.
In future work, we plan to exploit these advantages by constructing controllers that adapt to new situations dynamically and explain their behavior to the user in terms of these changes.

\subsubsection*{Acknowledgements.}
This work is funded by the DFG grant 389792660 as part of TRR 248 – CPEC, by the European Union with ERC
Grant HYPER (No. 101055412), and by the Emmy Noether Grant SCHM 3541/1-1.
Views and opinions expressed are however those of the author(s) only and do not necessarily reflect those of the European Union or the European Research Council Executive Agency. Neither the European Union nor the granting authority can be held responsible for them.

\bibliographystyle{splncs04}
\bibliography{bib}

\end{document}